\documentclass[journal,onecolumn]{IEEEtran}
\usepackage{amsmath,amssymb}
\usepackage[font=small]{caption}
\usepackage{amsthm}
\usepackage{graphicx}
\usepackage{epsfig}
\usepackage{bm}
\usepackage{hyperref}
\usepackage{times}
\usepackage[usenames,dvipsnames]{color}
\usepackage{algorithm,algorithmic}

\newcommand{\oomit}[1]{}
\newtheorem{example}{Example}
\newtheorem{definition}{Definition}

\newtheorem{lemma}{Lemma}
\newtheorem{theorem}{Theorem}
\newtheorem{corollary}{Corollary}

\newtheorem{assumption}{Assumption}

\begin{document}

\title{Over- and Under-Approximating Reach Sets for Perturbed Delay Differential Equations\thanks{This work has been supported through grants by NSFC under grant No. 61872341, 61836005, 61625206 and 61732001, and by the CAS Pioneer Hundred Talents Program under grant No. Y8YC235015.}}

 \author{Bai Xue, Qiuye Wang, Shenghua Feng and Naijun Zhan
\thanks{State Key Lab. of Computer Science, Institute of Software, CAS, China and University of Chinese Academy of Sciences, China. Email:\{xuebai,wangqy,fengsh,znj\}@ios.ac.cn}
}
\maketitle

\IEEEpeerreviewmaketitle

\begin{abstract}
This note explores reach set computations for perturbed delay differential equations (DDEs). The perturbed DDEs of interest in this note is a class of DDEs whose dynamics are subject to perturbations, and their solutions feature the local homeomorphism property with respect to initial states. Membership in this class of perturbed DDEs is determined by conducting sensitivity analysis of solution mappings with respect to initial states to impose a bound constraint on the time-lag term. The homeomorphism property of solutions to such class of perturbed DDEs enables us to construct over- and under-approximations of reach sets by performing reachability analysis on just the boundaries of their permitted initial sets, thereby permitting an extension of reach set computation methods for ordinary differential equations to perturbed DDEs. Three examples demonstrate the performance of our approach. 
\end{abstract}
\begin{IEEEkeywords}
Perturbed Delay Differential Equations; Homeomorphism Property; Boundary Reachability Methods
\end{IEEEkeywords}

\section{Introduction}
\label{IN}
Reachability analysis, which involves computing appropriate approximations of reachable state sets, plays a fundamental role in computer-aided verification and
analysis \cite{Alur04}.  We have over the past decades witnessed a rapidly growing interest in developing reachability analysis techniques for dynamic systems modeled by ordinary differential equations (ODEs) or hybrid-state extensions thereof, e.g., \cite{Asarin01,Mitchell05,Girard08,Benvenuti08,Althoff13,Xin12,
xue2017just,xue2020} and the references therein. 

However, physical systems are often composed of networks of interacting systems, time delay phenomenon thereby exists ubiquitously and is appearing unavoidably. Delays are often involved in sensing or actuating by physical devices, in data forwarding to or from the controller, etc. Therefore, when conducting safety verification of such physical systems with time-delay phenomenon, DDEs are a suitable tool for modeling dynamics of these systems. The problem of performing reachability analysis for DDEs  is surely challenging. Growing attention is drawn to it recently \cite{Prajna05,Zou15,Huang16,feng2019taming}. Most of existing works, however, focused on over-approximating reach sets for systems modeled by DDEs with finite or infinite time horizon, not touching on the under-approximation problem of reach sets for DDEs. Recently, \cite{XueMFCLZ17} inferred a class of perturbation-free DDEs with solution mappings featuring the local homeomorphism property with respect to initial states. For such DDEs, the set-boundary reachability analysis method for ODEs in \cite{Xue16} and \cite{IEEE16} is extended to over- and under-approximate reach sets. A straightforward extension of the method in \cite{XueMFCLZ17} is to deal with DDEs with time-constant perturbations. \cite{Goubault18} extended the method in \cite{GoubaultP17} and the Taylor model based reachability method for ODEs to the computation of outer- and inner-approximations of reach sets for DDEs with time-constant perturbations. This work goes further than \cite{XueMFCLZ17} and \cite{Goubault18}, and studies the over- and under-approximate reachability analysis problem for DDEs subject to time-varying Lipschitz perturbations.

 Like \cite{XueMFCLZ17}, a constraint on the time-lag term is inferred based on the requirement that the sensitivity matrix is strictly diagonally dominant. The resulting DDE with time-lag terms satisfying this constraint has solutions featuring the local homeomorphism property. This constraint presented in this note is uniform over all perturbations ranging over a compact set. We topologically show that over- and under-approximations of certain reach sets can be computed by performing reachability analysis on just the  initial set's boundary. A computed over-approximation, which can be used to determine robust satisfiability of safety properties regardless of the actual perturbation, is a set which includes states reachable by all possible trajectories starting from legal initial states. A computed under-approximation, which can be used to determine robust violation of safety properties regardless of the actual perturbation, is a set of states in which for any perturbation each state is reachable by a trajectory initialized at some state in the initial set. Three illustrative examples demonstrate our approach.
 
The structure of this note is as follows. We formulate perturbed DDEs and the reachability problem of interest in Section \ref{Pre}. In Section \ref{RSC_section3} we present our reachability analysis approach. Before concluding this note in Section \ref{con}, we evaluate our approach on three examples in Section \ref{Ex}.
\section{Preliminaries}
\label{Pre}
The following notations are used throughout this note: the space of continuously differentiable functions on $\mathcal{X}$ is denoted by $\mathcal{C}^1(\mathcal{X})$; $\Delta^{\circ}$, $\Delta^c$ and $\partial \Delta$ represent the interior, complement and boundary of the set $\Delta$  respectively; vectors in $\mathbb{R}^n$ are denoted by boldface letters; $\|\bm{x}\|$ denotes the 2-norm, i.e., $\|\bm{x}\|=\sqrt{\sum_{i=1}^n x_i^2}$, where $\bm{x}=(x_1,\ldots,x_n)^T$; the set of $n \times n$ matrices over the field $\mathbb{R}$ of real numbers is denoted by $\mathbb{R}^{n\times n}$.

In this note we consider systems that can be modeled by DDEs of the following form
\begin{equation}
\label{dde5}
\dot{\bm{x}}(t)=\bm{f}(\bm{x}(t),\bm{x}_{\tau}(t),\bm{d}(t)), t \in [\tau,K\tau],
\end{equation}
where $\bm{x}(t)=(x_1(t),x_2(t),\ldots,x_n(t))^T:\mathbb{R}\rightarrow \mathbb{R}^n$, $\bm{x}_{\tau}(t)=(x_1(t-\tau),x_2(t-\tau),\ldots,x_n(t-\tau))^{T}: \mathbb{R}\rightarrow \mathbb{R}^n$, $\bm{d}(t)=(d_1(t),\ldots,d_m(t))^T: \mathbb{R}\rightarrow \mathcal{D}$ with $\mathcal{D}$ being a  compact set in  $\mathbb{R}^m$ is often used to incorporate model uncertainties and external disturbances, $K\geq 2$ is a positive integer and $\bm{f}\in \mathcal{C}^1(\mathbb{R}^n\times \mathbb{R}^n\times \mathcal{D})$. The initial condition $\bm{x}(\cdot): [0,\tau]\rightarrow \mathbb{R}^n$ for DDE \eqref{dde5} is governed by ODEs of the following form $$\dot{\bm{x}}(t)=\bm{g}(\bm{x}(t),\bm{d}(t)),t\in[0,\tau], \bm{x}(0) \in \mathcal{I}_0,$$ where 
$\bm{d}(t):\mathbb{R}\rightarrow \mathcal{D}, \bm{g}\in \mathcal{C}^1(\mathbb{R}^n\times \mathcal{D})$ and $\mathcal{I}_0\subset \mathbb{R}^n$ is a compact set. 

Denote the set of admissible perturbation inputs as
\[
\begin{split}
\widehat{\mathcal{D}}:&=\{\bm{d}(\cdot):[0,K\tau]\rightarrow \mathcal{D}~\textit{L-Lipschitz continuous}\}.
\end{split}
\] That is, there exists a uniform constant $L>0$ such that
\[\|\bm{d}(s_1)-\bm{d}(s_2)\|\leq L|s_1-s_2|, \forall s_1,s_2\in [0,K\tau], \forall \bm{d}\in \widehat{\mathcal{D}}.\] 
This assumption for perturbation inputs is relatively strong compared to what typically appears in the literature such as measurable inputs. However, the resulting set $\widehat{\mathcal{D}}$ is compact in the topology induced by  the uniform norm according to Arzel\`a-Ascoli Theorem \cite{rudin1964}. The compactness property of the set $\widehat{\mathcal{D}}$ is useful in proving the compactness of reach sets of interest in this note, which can be found in Lemma \ref{com}.

Based on the above assumption, we denote the trajectory of system \eqref{dde5} initialized at $\bm{x}_0\in \mathcal{I}_0$ and subject to a perturbation $\bm{d} \in \widehat{\mathcal{D}}$ by $\bm{\phi}(\cdot;\bm{x}_0,\bm{d}): [0,T']\rightarrow \mathbb{R}^n$, where $\bm{\phi}(0;\bm{x}_0,\bm{d})=\bm{x}_0$. Besides, we assume that $K\tau \leq T'$. In the following we define reach sets of the initial set $\mathcal{I}_0$ for $t\in [0,K\tau]$.
\begin{definition}
\label{re}
Given a perturbation $\bm{d}\in \widehat{\mathcal{D}}$, the reach set $\Omega(t;\mathcal{I}_0,\bm{d})$ at time $t\in [0,K\tau]$ is a set of states reached by trajectories starting from $\mathcal{I}_0$ after time duration $t$, i.e.,
$$\Omega(t;\mathcal{I}_0,\bm{d})=\{\bm{x}\in \mathbb{R}^n\mid \exists \bm{x}_0\in \mathcal{I}_0, \bm{x}=\bm{\phi}(t;\bm{x}_0,\bm{d})\}.$$
The reach set $\Omega_1(t;\mathcal{I}_0)$ 
at time $t\in [0,K\tau]$ is a set of states $\{\bm{x}\}$ visited by all trajectories originating from $\mathcal{I}_0$ after time duration $t$, i.e., $$\Omega_1(t;\mathcal{I}_0)=\{\bm{x}\in \mathbb{R}^n\mid  \exists \bm{d}\in \widehat{\mathcal{D}}, \exists \bm{x}_0\in \mathcal{I}_0, \bm{x}=\bm{\phi}(t;\bm{x}_0,\bm{d})\}.$$
The reach set $\Omega_2(t;\mathcal{I}_0)$ at time $t\in [0,K\tau]$ is a set of states $\{\bm{x}\}$ such that for every state $\bm{x}$ in it and every perturbation input $\bm{d}\in \widehat{\mathcal{D}}$, there exists a corresponding initial state $\bm{x}_0\in I$ such that $\bm{x}=\bm{\phi}(t;\bm{x}_0,\bm{d})$, i.e., $$\Omega_2(t;\mathcal{I}_0)=\{\bm{x}\in \mathbb{R}^n\mid \forall \bm{d}\in \widehat{\mathcal{D}}, \exists \bm{x}_0\in \mathcal{I}_0, \bm{x}=\bm{\phi}(t;\bm{x}_0,\bm{d})\}.\square$$
\end{definition}
The reach sets $\Omega_1(t;\mathcal{I}_0)$ and $\Omega_2(t;\mathcal{I}_0)$ are termed as the maximal and minimal forward reach sets in \cite{Mitchell07}, respectively. It is obvious that $\Omega_1(t;\mathcal{I}_0)=\cup_{\bm{d}\in \widehat{\mathcal{D}}}\Omega(t;\mathcal{I}_0,\bm{d})$ and $\Omega_2(t;\mathcal{I}_0)=\cap_{\bm{d}\in \widehat{\mathcal{D}}}\Omega(t;\mathcal{I}_0,\bm{d})$. In this note we focus on the computation of over-approximations of the reach set $\Omega_1(t;\mathcal{I}_0)$ and under-approximations of the reach set $\Omega_2(t;\mathcal{I}_0)$ for $t\in [0,K\tau]$. 
\begin{definition}
\label{OandU}
Given $t\in [0,K\tau]$, an over-approximation of the reach set $\Omega_1(t;\mathcal{I}_0)$ is a set $O(t;\mathcal{I}_0)$ satisfying $\Omega_1(t;\mathcal{I}_0)\subseteq O(t;\mathcal{I}_0).$ An under-approximation $U(t;\mathcal{I}_0)$ of the reach set $\Omega_2(t;\mathcal{I}_0)$ is a subset of $\Omega_2(t;\mathcal{I}_0)$, i.e., 
$U(t;\mathcal{I}_0)\subseteq \Omega_2(t;\mathcal{I}_0).$
\end{definition}

From Definition \ref{OandU}, an over-approximation $O(t;\mathcal{I}_0)$ is an enclosure such that $\bm{\phi}(t;\bm{x}_0,\bm{d})\in O(t;\mathcal{I}_0)$ for $\bm{x}_0 \in \mathcal{I}_0$ and $\bm{d}\in \widehat{\mathcal{D}}$, where $0\leq t\leq K\tau$. \textit{A useful property of such over-approximations is to determine robust satisfiability of safety properties.} If system \eqref{dde5} is modeled by an ODE, there are various existing methods for computing such over-approximations. For example, \cite{Althoff08} and \cite{Althoff13} proposed methods to perform over-approximate reachability analysis for ODEs subject to  Lipschitz continuous and piecewise continuous perturbations respectively. In contrast, an under-approximation $U(t;\mathcal{I}_0)$ is a set of states $\{\bm{x}\}$ such that for $\bm{x}\in U(t;\mathcal{I}_0)$ and $\bm{d}\in \widehat{\mathcal{D}}$ there exists a corresponding initial state $\bm{x}_0 \in \mathcal{I}_0$ satisfying $\bm{x}=\bm{\phi}(t;\bm{x}_0,\bm{d})$. \textit{A useful property of such under-approximations is to determine  robust violation of safety properties regardless of the actual perturbation.} \oomit{We believe the reach set $\Omega_2(t;\mathcal{I}_0)$ provides new tools towards robust property verification or control synthesis.} 

Throughout this note some additional assumptions are used.
\begin{assumption}
\label{assump}
1).  The viable evolution domain for system \eqref{dde5} is denoted by $\mathcal{X}$, a compact subset of $\mathbb{R}^n$. 2). The initial set $\mathcal{I}_0$ is a subset of $\mathcal{X}$. Also, the set of states visited by all trajectories starting from the initial set $\mathcal{I}_0$ within the time interval $[0,K\tau]$ is also included in $\mathcal{X}$, i.e., $\cup_{t\in [0,K\tau]}\Omega_1(t;\mathcal{I}_0)\subseteq \mathcal{X}$. (One technique to guarantee this assumption in our reachability computations will be demonstrated in Section \ref{Ex}.) 3). The infinity norms of matrices $\frac{\partial \bm{g}(\bm{x},\bm{d})}{\partial \bm{x}}$, $\frac{\partial \bm{f}(\bm{x},\bm{y},\bm{d})}{\partial \bm{x}}$ and $\frac{\partial \bm{f}(\bm{x},\bm{y},\bm{d})}{\partial \bm{y}}$ are uniformly bounded for $(\bm{x},\bm{y},\bm{d})\in \mathcal{X} \times \mathcal{X} \times \mathcal{D}$, i.e.,
\[\|\frac{\partial \bm{g}(\bm{x},\bm{d})}{\partial \bm{x}}\|_{\infty}\leq M',\|\frac{\partial \bm{f}(\bm{x},\bm{y},\bm{d})}{\partial \bm{x}}\|_{\infty}\leq M, \|\frac{\partial \bm{f}(\bm{x},\bm{y},\bm{d})}{\partial \bm{y}}\|_{\infty}\leq N,\]
where $M'$, $M$ and $N$ are positive real numbers. Since $g\in \mathcal{C}^1(\mathcal{X} \times \mathcal{D})$ and $\bm{f}\in \mathcal{C}^1(\mathcal{X} \times \mathcal{X} \times \mathcal{D})$, $M'$, $M$ and $N$ exist. 
\end{assumption}

\section{Reach Sets Computation}
\label{RSC_section3}
This section presents the set-boundary reachability method to compute over-approximations of the set $\Omega_1(t;\mathcal{I}_0)$ and under-approximations of the set $\Omega_2(t;\mathcal{I}_0)$ for a class of systems of the form \eqref{dde5}. This class of DDEs exhibits solution mappings featuring the local homeomorphism property with respect to initial states. We first derive a constraint on the time-lag term $\tau$ of DDE \eqref{dde5} such that the homeomorphism property is guaranteed. Then, we topologically prove that the boundaries of sets $\Omega_1(t;\mathcal{I}_0)$ and $\Omega_2(t;\mathcal{I}_0)$ for such class of systems can be retrieved by evolving the initial set's boundary. 

Theorem \ref{tau} presents the constraint on the time-lag term $\tau$ such that system \eqref{dde5} exhibits solutions featuring the homeomorphism property with respect to initial states.
\begin{theorem}
\label{tau}
If the time-lag term $\tau$ of DDE \eqref{dde5} satisfies
\begin{equation}
\label{uppp}
\tau \leq \min\left\{\frac{\epsilon-1}{\epsilon M'R},\frac{R-1}{M'R},\frac{\epsilon-1}{\epsilon R(M+N\epsilon)}, \frac{R-1}{R(M+N\epsilon)}\right\},
\end{equation}
where $R>1$ and $\epsilon>1$, then the solution mapping $\bm{\phi}(t;\cdot,\bm{d}):\mathcal{I}_0\rightarrow \Omega(t;\mathcal{I}_0,\bm{d})$ to system \eqref{dde5} is a homeomorphism between spaces $\mathcal{I}_0$ and $\Omega(t;\mathcal{I}_0,\bm{d})$ for $t\in [0,K\tau]$ and $\bm{d}\in \widehat{\mathcal{D}}$.
\end{theorem}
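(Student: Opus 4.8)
The plan is to prove the statement by reducing it to injectivity of $\bm{\phi}(t;\cdot,\bm{d})$ and then establishing that injectivity through the sensitivity matrix. For a fixed $t\in[0,K\tau]$ and $\bm{d}\in\widehat{\mathcal{D}}$, the map $\bm{\phi}(t;\cdot,\bm{d})$ is continuous on $\mathcal{I}_0$ (standard continuous dependence of solutions of DDE \eqref{dde5} on their initial data) and, by Definition \ref{re}, maps $\mathcal{I}_0$ onto $\Omega(t;\mathcal{I}_0,\bm{d})$; since $\mathcal{I}_0$ is compact and $\Omega(t;\mathcal{I}_0,\bm{d})\subseteq\mathbb{R}^n$ is Hausdorff, a continuous bijection between the two is automatically a homeomorphism. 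Hence it suffices to show that $\bm{\phi}(t;\cdot,\bm{d})$ is injective on $\mathcal{I}_0$ for every $t\in[0,K\tau]$ and every $\bm{d}\in\widehat{\mathcal{D}}$.

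First I would set up the variational (sensitivity) equations. Because $\bm{g}\in\mathcal{C}^1$ and $\bm{f}\in\mathcal{C}^1$, the solution $\bm{\phi}(t;\bm{x}_0,\bm{d})$ depends $\mathcal{C}^1$-smoothly on $\bm{x}_0$ on a neighbourhood of $\mathcal{I}_0$ in $\mathcal{X}$, and the Jacobian $J(t):=\partial\bm{\phi}(t;\bm{x}_0,\bm{d})/\partial\bm{x}_0$ satisfies, along the trajectory, $\dot J(t)=\frac{\partial\bm{g}}{\partial\bm{x}}J(t)$ on $[0,\tau]$ with $J(0)=I$, and $\dot J(t)=\frac{\partial\bm{f}}{\partial\bm{x}}J(t)+\frac{\partial\bm{f}}{\partial\bm{y}}J(t-\tau)$ on each $[k\tau,(k+1)\tau]$, $k=1,\dots,K-1$. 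The central claim I would prove is that, under \eqref{uppp}, $J(t)$ is strictly diagonally dominant with positive diagonal — hence nonsingular — for every $t\in[0,K\tau]$, uniformly in $\bm{x}_0\in\mathcal{I}_0$ and $\bm{d}\in\widehat{\mathcal{D}}$.

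I would establish this claim by an interval-by-interval Gr\"onwall/bootstrap argument, with $\epsilon$ and $R$ serving as the a priori bounds propagated for $\|J(\cdot)\|_\infty$. On $[0,\tau]$ one rewrites $J$ in integral form, uses $\|\partial\bm{g}/\partial\bm{x}\|_\infty\le M'$, and closes a self-consistent estimate from the first two inequalities of \eqref{uppp}; the same estimate keeps the increment $\|J(t)-J(0)\|_\infty$ small enough that $J(t)$ stays strictly diagonally dominant with positive diagonal, so in particular $J(\tau)$ has these properties. One then propagates this along $[\tau,2\tau],[2\tau,3\tau],\dots$: on $[k\tau,(k+1)\tau]$ the delayed term $J(t-\tau)$ is already under control on the previous interval, and using $\|\partial\bm{f}/\partial\bm{x}\|_\infty\le M$ and $\|\partial\bm{f}/\partial\bm{y}\|_\infty\le N$ together with the last two inequalities of \eqref{uppp} (where the combination $M+N\epsilon$ appears precisely because the delayed sensitivity is bounded by $\epsilon$) one bounds the increment of $J$ over the interval and keeps the (relative) strict-diagonal-dominance margin positive. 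The point is that this margin need not remain uniformly bounded below, only positive on each subinterval — which is why $K$ does not enter \eqref{uppp} — and I expect arranging the four inequalities so that the induction closes through the delay coupling to be the main technical obstacle.

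Finally I would upgrade nonsingularity of $J$ to global injectivity. The quickest route uses the mean-value identity
\[
\bm{\phi}(t;\bm{x}_0^1,\bm{d})-\bm{\phi}(t;\bm{x}_0^2,\bm{d})=\Big(\int_0^1 J\big(t;\bm{x}_0^2+s(\bm{x}_0^1-\bm{x}_0^2),\bm{d}\big)\,ds\Big)(\bm{x}_0^1-\bm{x}_0^2),
\]
valid after (harmlessly) enlarging the viability domain so that the segment between $\bm{x}_0^1$ and $\bm{x}_0^2$ lies in the domain: every integrand matrix is strictly diagonally dominant with positive diagonal, hence so is the average ($\overline{J}_{ii}=\overline{J_{ii}}>0$ and $\sum_{j\neq i}|\overline{J}_{ij}|\le\overline{\sum_{j\neq i}|J_{ij}|}<\overline{J_{ii}}$), so the average is nonsingular and the left-hand side vanishes only when $\bm{x}_0^1=\bm{x}_0^2$. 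Alternatively one can argue topologically: a nonsingular Jacobian makes $\bm{\phi}(t;\cdot,\bm{d})$ a proper local homeomorphism (properness being automatic from compactness of $\mathcal{I}_0$), $\bm{\phi}(0;\cdot,\bm{d})=\mathrm{id}$ is injective, and the number of preimages cannot jump as $t$ increases over the connected parameter interval, so injectivity persists. Combining injectivity with the first paragraph's reduction shows that $\bm{\phi}(t;\cdot,\bm{d}):\mathcal{I}_0\to\Omega(t;\mathcal{I}_0,\bm{d})$ is a homeomorphism for all $t\in[0,K\tau]$ and $\bm{d}\in\widehat{\mathcal{D}}$.
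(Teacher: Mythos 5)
Your reduction (continuous bijection from a compact set onto a Hausdorff image is a homeomorphism, so only injectivity is needed) is fine, but your central claim is not the one the paper proves, and it is not provable from \eqref{uppp}: you assert that the Jacobian $J(t)=\partial\bm{\phi}(t;\bm{x}_0,\bm{d})/\partial\bm{x}_0$ with respect to the \emph{original} initial state stays strictly diagonally dominant on all of $[0,K\tau]$. Condition \eqref{uppp} only limits the drift of a sensitivity matrix over a single interval of length $\tau$; it says nothing about $K\tau$. Quantitatively, your own bookkeeping shows the per-interval increment of $J$ is bounded only by a fixed constant of order $\tau R(M+N\epsilon)\le\min\{(\epsilon-1)/\epsilon,(R-1)/R\}$, while the dominance margin you can guarantee at the end of the first interval is only about $1/\epsilon$; since these constants do not shrink with $k$, the margin of a single non-restarted $J$ is exhausted after boundedly many intervals, so the induction you describe ("the margin need only stay positive on each subinterval, which is why $K$ does not enter") does not close for arbitrary $K$. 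Indeed the claim itself is false in general: over a long horizon the flow Jacobian can rotate far from the identity (Example 2 of the paper over $[0,5]$ is already of this kind), losing diagonal dominance even though it remains nonsingular. Both of your injectivity upgrades (the mean-value averaging, and the degree-type argument) are fed by this global dominance/nonsingularity of $J$, so the gap propagates through the whole proof; the averaging step moreover needs segments between initial points to stay where the bounds $M,N$ hold, which Assumption 1 does not give you.

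The paper avoids exactly this problem by \emph{restarting} the sensitivity matrix at every multiple of $\tau$: it works with $s^{\bm{d}}_{\bm{x}(k\tau)}(t)=\partial\bm{x}(t)/\partial\bm{x}(k\tau)$, which equals $\bm{I}$ at $t=k\tau$, so strict diagonal dominance only has to survive one interval of length $\tau$ (Lemmas in the Appendix). The delayed term is then handled by writing $\partial\bm{x}_{\tau}(t)/\partial\bm{x}(k\tau)=\bigl(\partial\bm{x}(t-\tau)/\partial\bm{x}((k-1)\tau)\bigr)\bigl(\partial\bm{x}((k-1)\tau)/\partial\bm{x}(k\tau)\bigr)$, where the second factor is the \emph{inverse} of the previous interval's sensitivity at $k\tau$ and is bounded in infinity norm by $\epsilon$ via Varah's bound for strictly diagonally dominant matrices; the first factor is bounded by $R$. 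This is where $R(M+N\epsilon)$ comes from — not, as you suggest, from "the delayed sensitivity being bounded by $\epsilon$"; in your single-$J$ variational equation $\dot J=\frac{\partial\bm{f}}{\partial\bm{x}}J(t)+\frac{\partial\bm{f}}{\partial\bm{y}}J(t-\tau)$ no inverse factor appears at all, which is a sign that the four constants in \eqref{uppp} cannot be consumed the way you describe. Injectivity over $[0,K\tau]$ is then obtained by composing the per-interval maps $\bm{x}(k\tau)\mapsto\bm{x}(t)$, each injective because its restarted sensitivity matrix is nonsingular (Corollary 1); no diagonal dominance of the composite Jacobian is ever needed or claimed. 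To repair your proposal you would have to adopt this restarted, per-interval structure (and the Varah estimate that quantifies the inverse), rather than propagate dominance of the $\bm{x}_0$-Jacobian across the whole horizon.
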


The derivation of constraint \eqref{uppp} is based on the requirement that the sensitivity matrix is strictly diagonally dominant as in \cite{XueMFCLZ17}. It is shown in Appendix. 
Comparing to Theorem 1 in \cite{XueMFCLZ17}, constraint \eqref{uppp} on the time-lag term $\tau$ gets rid of the explicit dependency on the dimension $n$, thereby avoiding possibly overly conservative requirement on $\tau$ such that the solution to system \eqref{dde5} exhibits the local homeomorphism property when the dimension $n$ of system \eqref{dde5} is too large. The underlying reason is that the derivation of constraint \eqref{uppp} only involves operations of the infinity norm of matrices. However, the derivation in \cite{XueMFCLZ17} involves manipulating 2-norm, infinity norm and max norm of matrices and their interconvertibility, thereby introducing the dimension $n$ into the estimate. 

If $\tau$ satisfies \eqref{uppp}, the solution mapping $\bm{\phi}(t;\cdot,\bm{d}):\mathcal{I}_0\rightarrow \Omega(t;\mathcal{I}_0,\bm{d})$ to system \eqref{dde5}, where $t\in [0,K\tau]$ and $\bm{d}\in \widehat{\mathcal{D}}$, maps the boundary and interior of the initial set $\mathcal{I}_0$ onto the boundary and interior of the set $\Omega(t;\mathcal{I}_0,\bm{d})$ respectively.

\subsection{Topological Analysis}
\label{RSC}
We in this subsection show that both reach sets $\Omega_1(t;\mathcal{I}_0)$ and $\Omega_2(t;\mathcal{I}_0)$ can be retrieved by performing reachability analysis on the initial set's boundary for system \eqref{dde5} with $\tau$ satisfying condition \eqref{uppp} in Theorem \ref{tau}. We firstly show that the solution $\bm{\phi}(t;\bm{x}_0,\bm{d})$ is continuous over $\bm{x}_0\in \mathcal{I}_0$ and $\bm{d}\in \widehat{\mathcal{D}}$.

\begin{lemma}
\label{conti}
If $\lim_{k \rightarrow \infty}\bm{d}_k(s)=\bm{d}(s)$  point-wise over $s\in [0,K\tau]$ and $\lim_{k\rightarrow \infty} \bm{x}_{k,0}=\bm{x}_0$, where $\bm{d}_k\in \widehat{\mathcal{D}}$ and $\bm{x}_{k,0}\in \mathcal{I}_0$,
then $\lim_{k\rightarrow \infty}\bm{\phi}(t;\bm{x}_{k,0},\bm{d}_k)= \bm{\phi}(t;\bm{x}_0,\bm{d})$ point-wise over $t\in [0,K\tau]$.
\end{lemma}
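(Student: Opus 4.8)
The plan is to prove the convergence by the \emph{method of steps}, working interval by interval on $[0,\tau],[\tau,2\tau],\dots,[(K-1)\tau,K\tau]$ and strengthening the asserted pointwise convergence to \emph{uniform} convergence on each subinterval as the induction proceeds. Two facts would be recorded first. (i) Every $\bm{d}_k$ is $L$-Lipschitz, so the family $\{\bm{d}_k\}$ is equicontinuous on the compact interval $[0,K\tau]$; together with $\bm{d}_k\to\bm{d}$ pointwise this forces $\bm{d}$ to be $L$-Lipschitz as well and upgrades the convergence to a uniform one, $\sup_{s\in[0,K\tau]}\|\bm{d}_k(s)-\bm{d}(s)\|\to 0$. (ii) By Assumption \ref{assump} the trajectories $\bm{\phi}(\cdot;\bm{x}_{k,0},\bm{d}_k)$ and $\bm{\phi}(\cdot;\bm{x}_0,\bm{d})$ remain in the compact set $\mathcal{X}$ on $[0,K\tau]$, so $\bm{g}$ is globally Lipschitz in its state argument with constant $M'$ and $\bm{f}$ with constants $M$ and $N$ in its first and second arguments (all norms on $\mathbb{R}^n$ being equivalent), and both $\bm{g}$ and $\bm{f}$ are uniformly continuous on the pertinent compact domains.

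For the base case $t\in[0,\tau]$, I would write $\bm{x}_k(\cdot):=\bm{\phi}(\cdot;\bm{x}_{k,0},\bm{d}_k)$ and $\bm{x}(\cdot):=\bm{\phi}(\cdot;\bm{x}_0,\bm{d})$ in integral form driven by $\bm{g}$, subtract, and split $\bm{g}(\bm{x}_k(s),\bm{d}_k(s))-\bm{g}(\bm{x}(s),\bm{d}(s))$ into a piece bounded by $M'\|\bm{x}_k(s)-\bm{x}(s)\|$ and the piece $\bm{g}(\bm{x}(s),\bm{d}_k(s))-\bm{g}(\bm{x}(s),\bm{d}(s))$, whose supremum $\delta_k$ over $s\in[0,\tau]$ tends to $0$ by uniform continuity of $\bm{g}$ and the uniform convergence of $\bm{d}_k$. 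Gr\"onwall's inequality then yields $\sup_{t\in[0,\tau]}\|\bm{x}_k(t)-\bm{x}(t)\|\le\bigl(\|\bm{x}_{k,0}-\bm{x}_0\|+\tau\delta_k\bigr)e^{M'\tau}\to 0$.

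For the inductive step, suppose $\bm{x}_k\to\bm{x}$ uniformly on $[0,j\tau]$ for some $1\le j\le K-1$ and take $t\in[j\tau,(j+1)\tau]$. On this interval the delayed argument $s-\tau$ ranges in $[(j-1)\tau,j\tau]\subseteq[0,j\tau]$, so $\eta_k:=\sup_{s\in[j\tau,(j+1)\tau]}\|\bm{x}_k(s-\tau)-\bm{x}(s-\tau)\|\to 0$ by the induction hypothesis. Writing the solutions in integral form over $[j\tau,t]$ (now driven by $\bm{f}$), subtracting, and splitting $\bm{f}(\bm{x}_k(s),\bm{x}_k(s-\tau),\bm{d}_k(s))-\bm{f}(\bm{x}(s),\bm{x}(s-\tau),\bm{d}(s))$ into three differences bounded respectively by $M\|\bm{x}_k(s)-\bm{x}(s)\|$, $N\eta_k$, and a term whose supremum over $s$ tends to $0$ by uniform continuity of $\bm{f}$ and the uniform convergence of $\bm{d}_k$, a further application of Gr\"onwall's inequality---with initial value $\|\bm{x}_k(j\tau)-\bm{x}(j\tau)\|\to 0$ supplied by the induction hypothesis---gives uniform convergence on $[j\tau,(j+1)\tau]$, hence on $[0,(j+1)\tau]$. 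After $K$ such steps one obtains uniform, and in particular pointwise, convergence of $\bm{\phi}(t;\bm{x}_{k,0},\bm{d}_k)$ to $\bm{\phi}(t;\bm{x}_0,\bm{d})$ on $[0,K\tau]$.

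The only genuine obstacle is converting the pointwise convergence $\bm{d}_k\to\bm{d}$ into an estimate usable inside the integrals; the equicontinuity / Arzel\`a--Ascoli upgrade to uniform convergence noted in (i) above is the cleanest device for this, and it is exactly where the $L$-Lipschitz hypothesis on the perturbation inputs enters. (Alternatively, one could keep only pointwise convergence and invoke the dominated convergence theorem, the integrands being uniformly bounded since the states lie in the compact $\mathcal{X}$ and $\bm{d}_k(s)\in\mathcal{D}$.) Everything else is the standard Gr\"onwall bootstrap repeated $K$ times, so no further difficulty is anticipated.
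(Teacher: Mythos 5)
Your proposal is correct and follows essentially the same route as the paper: a method-of-steps induction over the intervals $[j\tau,(j+1)\tau]$ combined with a Gr\"onwall estimate on the difference of the integral equations, using Assumption \ref{assump} to keep trajectories in the compact set $\mathcal{X}$. The only difference is technical bookkeeping: you upgrade $\bm{d}_k\to\bm{d}$ to uniform convergence via equicontinuity of the $L$-Lipschitz family (thereby getting uniform convergence of the trajectories), whereas the paper keeps only pointwise convergence of the driving terms and disposes of the perturbation integral by dominated convergence --- the alternative you yourself note parenthetically.
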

\begin{proof} 
As point-wise limits of $L$-Lipschitz continuous functions are $L$-Lipschitz continuous according to Proposition 1.2.4 in \cite{weaver1999}, $\bm{d}\in \widehat{\mathcal{D}}$ holds. Also, $\bm{x}_0\in \mathcal{I}_0$.

We first prove this statement:
Let $(\bm{h}_k)_k$ be a sequence of functions with $\|\bm{h}_k\|\leq B$, where $\bm{h}_k: [0,K\tau]\times \mathcal{X}\rightarrow \mathbb{R}^n$ is continuous over $t\in [0,K\tau]$ and uniformly $L'$-Lipschitz continuous over $\bm{x}\in \mathcal{X}$ for $t\in [0,K\tau]$. If $\lim_{k\rightarrow \infty}\bm{h}_k=\bm{h}$ point-wise, $\bm{x}(\cdot):[0,K\tau]\rightarrow \mathcal{X}$ and $\bm{x}_k(\cdot):[0,K\tau]\rightarrow \mathcal{X}$ are respectively solutions to $\dot{\bm{x}}(s)=\bm{h}(s,\bm{x}(s))$ \text{~and~} $\dot{\bm{x}}_k(s)=\bm{h}_k(s,\bm{x}_k(s))$,
 and $\bm{x}(\delta)=\lim_{k\rightarrow \infty} \bm{x}_k(\delta)$, then $\lim_{k\rightarrow \infty}\bm{x}_k(t)=\bm{x}(t)$ point-wise on $[\delta,K\tau]$ with $\delta\geq 0$. 

Since $\bm{x}(t)$ and $\bm{x}_k(t)$ satisfy 
$\bm{x}(t)=\bm{x}(\delta)+\int_{\delta}^{t} \bm{h}(s,\bm{x}(s))ds$ and 
$\bm{x}_k(t)=\bm{x}_k(\delta)+\int_{\delta}^t\bm{h}_k(s,\bm{x}_k(s))ds$, and $\|\bm{h}_k(s,\bm{x})-\bm{h}_k(s,\bm{y})\|\leq L'\|\bm{x}-\bm{y}\|$ for $\bm{x},\bm{y}\in \mathcal{X}$ and $s\in [0,K\tau]$, we have that
\begin{equation*}
\begin{split}
&\|\bm{x}(t)-\bm{x}_k(t)\|\leq \|\bm{x}(\delta)-\bm{x}_k(\delta)\|+\int_{\delta}^t \|\bm{h}(s,\bm{x}(s))-\bm{h}_k(s,\bm{x}(s))\|ds +\int_{\delta}^t\|\bm{h}_k(s,\bm{x}(s))-\bm{h}_k(s,\bm{x}_k(s))\| ds\\
&\leq \|\bm{x}(\delta)-\bm{x}_k(\delta)\|+\int_{\delta}^t \|\bm{h}(s,\bm{x}(s))-\bm{h}_k(s,\bm{x}(s))\|ds+\int_{\delta}^t L'\|\bm{x}(s)-\bm{x}_k(s)\| ds.
\end{split}
\end{equation*}
Since $\|\bm{x}(\delta)-\bm{x}_k(\delta)\|+\int_{\delta}^t \|\bm{h}(s,\bm{x}(s))-\bm{h}_k(s,\bm{x}(s))\|ds$ is non-decreasing with respect to $t$, Gr\"onwall's inequality implies
\[\|\bm{x}(t)-\bm{x}_k(t)\|\leq e^{L't-L'\delta}\|\bm{x}(\delta)-\bm{x}_k(\delta)\|+e^{L't-L'\delta}\int_{\delta}^t\|\bm{h}(s,\bm{x}(s))-\bm{h}_k(s,\bm{x}(s))\|ds.\]
Since $\|\bm{h}_k\| \leq B$ over $[0,K\tau]\times \mathcal{X}$ and $\lim_{k \rightarrow \infty}\|\bm{h}(s,\bm{x}(s))-\bm{h}_k(s,\bm{x}(s))\|=0$ for $s\in [0,K\tau]$, $\lim_{k \rightarrow \infty} \bm{x}_k(t)=\bm{x}(t)$ point-wise on $[\delta,K\tau]$ by dominated convergence and $\bm{x}(\delta)=\lim_{k\rightarrow \infty} \bm{x}_k(\delta)$.


We now prove the lemma by induction on $[0,i\tau]$, where $0\leq i\leq K$. Since $\bm{g}\in \mathcal{C}^1(\mathcal{X}\times \mathcal{D})$ and Assumption \ref{assump} implying that $\bm{\phi}(\cdot;\bm{x}_{k,0},\bm{d}_k):[0,K\tau]\rightarrow \mathcal{X}$ and $\bm{\phi}(\cdot;\bm{x}_0,\bm{d}):[0,K\tau]\rightarrow \mathcal{X}$, we obtain $\lim_{k \rightarrow \infty}\bm{\phi}(t;\bm{x}_{k,0},\bm{d}_k)=\bm{\phi}(t;\bm{x}_0,\bm{d})$ for $t\in [0,\tau]$ by setting $\bm{h}(s,\bm{x}):=\bm{g}(\bm{x},\bm{d}(s))$ and $\bm{h}_k(s,\bm{x}):=\bm{g}(\bm{x},\bm{d}_k(s))$. 

Let $\lim_{k \rightarrow \infty}\bm{\phi}(t;\bm{x}_{k,0},\bm{d}_k)=\bm{\phi}(t;\bm{x}_0,\bm{d})$ point-wise for $t\in [0, i\tau]$, where $0\leq i\leq K-1$, and $\lim_{k \rightarrow \infty} \bm{x}_{k,0}=\bm{x}_0$, we next show $\lim_{k \rightarrow \infty}\bm{\phi}(t;\bm{x}_{k,0},\bm{d}_k)=\bm{\phi}(t;\bm{x}_0,\bm{d})$ point-wise for $t\in [0, (i+1)\tau]$.

We have that $\lim_{k \rightarrow \infty}\bm{h}_k(s,\bm{x})=\bm{h}(s,\bm{x})$ point-wise over $s\in [0,(i+1)\tau]$, where $\bm{h}(s,\bm{x})=\bm{f}(\bm{x},\bm{\phi}(s-\tau;\bm{x}_0,\bm{d}),\bm{d}(s))$ \text{~and} $\bm{h}_k(s,\bm{x})=\bm{f}(\bm{x},\bm{\phi}(s-\tau;\bm{x}_{k,0},\bm{d}_k),\bm{d}_k(s)).$  Also, since $\bm{f}\in \mathcal{C}^1(\mathcal{X}\times \mathcal{X}\times \mathcal{D})$, we have that $\bm{h}_k$ and $\bm{h}$ satisfy the assumptions in the above statement. Let
\begin{equation}
\begin{split}
&\dot{\bm{x}}(s)=\bm{h}(s,\bm{x}(s)),  \bm{x}(i\tau)=\bm{\phi}(i\tau;\bm{x}_0,\bm{d}),\\
&\dot{\bm{x}}_k(s)=\bm{h}_k(s,\bm{x}(s)),  \bm{x}_k(i\tau)=\bm{\phi}(i\tau;\bm{x}_{k,0},\bm{d}_k).
\end{split}
\end{equation}
Since $\lim_{k \rightarrow \infty}\bm{x}_k(i\tau)=\bm{x}(i\tau)$, we have \[\lim_{k \rightarrow \infty}\bm{\phi}(t;\bm{x}_{0,k},\bm{d}_k)=\bm{\phi}(t;\bm{x}_0,\bm{d})\] point-wise over $t\in [i\tau, (i+1)\tau]$ and thus \[\lim_{k \rightarrow \infty}\bm{\phi}(t;\bm{x}_{0,k},\bm{d}_k)=\bm{\phi}(t;\bm{x}_0,\bm{d})\] point-wise over $t\in [0, (i+1)\tau]$. \hfill $\Box$
\end{proof}

Based on  Lemma \ref{conti}, we next show the compactness of reach sets $\Omega_1(t;\mathcal{I}_0)$ and $\Omega_2(t;\mathcal{I}_0)$, where $t\in [0,K\tau]$.
\begin{lemma}
\label{com}
If $\mathcal{I}_0$ is compact, both reach sets $\Omega_1(t;\mathcal{I}_0)$ and $\Omega_2(t;\mathcal{I}_0)$ are compact for $t\in [0,K\tau]$.
\end{lemma}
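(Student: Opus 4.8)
The plan is to prove compactness of $\Omega_1(t;\mathcal{I}_0)$ and $\Omega_2(t;\mathcal{I}_0)$ by establishing that each is a sequentially closed subset of $\mathcal{X}$, hence closed (as $\mathcal{X}$ is compact Hausdorff in $\mathbb{R}^n$), and therefore compact since $\mathcal{X}$ is compact. The main engine will be Lemma \ref{conti} together with the compactness of $\mathcal{I}_0$ and of the admissible perturbation set $\widehat{\mathcal{D}}$ (compact in the uniform topology by Arzel\`a--Ascoli, as noted after the definition of $\widehat{\mathcal{D}}$); note that uniform convergence in $\widehat{\mathcal{D}}$ in particular gives pointwise convergence, which is the hypothesis Lemma \ref{conti} needs.

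For $\Omega_1(t;\mathcal{I}_0)$: fix $t\in[0,K\tau]$ and take a sequence $\bm{y}_k\in\Omega_1(t;\mathcal{I}_0)$ with $\bm{y}_k\to\bm{y}$. By definition there are $\bm{x}_{k,0}\in\mathcal{I}_0$ and $\bm{d}_k\in\widehat{\mathcal{D}}$ with $\bm{y}_k=\bm{\phi}(t;\bm{x}_{k,0},\bm{d}_k)$. By compactness of $\mathcal{I}_0$ and of $\widehat{\mathcal{D}}$ I can pass to a subsequence along which $\bm{x}_{k,0}\to\bm{x}_0\in\mathcal{I}_0$ and $\bm{d}_k\to\bm{d}$ uniformly (hence pointwise) with $\bm{d}\in\widehat{\mathcal{D}}$. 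Lemma \ref{conti} then yields $\bm{\phi}(t;\bm{x}_{k,0},\bm{d}_k)\to\bm{\phi}(t;\bm{x}_0,\bm{d})$, so $\bm{y}=\bm{\phi}(t;\bm{x}_0,\bm{d})\in\Omega_1(t;\mathcal{I}_0)$. This shows $\Omega_1(t;\mathcal{I}_0)$ is closed; since it is contained in $\mathcal{X}$ by Assumption \ref{assump}, it is compact.

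For $\Omega_2(t;\mathcal{I}_0)$: the cleanest route is to use the representation $\Omega_2(t;\mathcal{I}_0)=\cap_{\bm{d}\in\widehat{\mathcal{D}}}\Omega(t;\mathcal{I}_0,\bm{d})$ already stated in the excerpt. Each single-perturbation reach set $\Omega(t;\mathcal{I}_0,\bm{d})$ is the continuous image (by Lemma \ref{conti}, with $\bm{d}$ fixed) of the compact set $\mathcal{I}_0$, hence compact, hence closed. An arbitrary intersection of closed sets is closed, and this intersection lies inside the compact $\mathcal{X}$, so $\Omega_2(t;\mathcal{I}_0)$ is compact. (Alternatively one can argue sequentially: given $\bm{y}_k\in\Omega_2(t;\mathcal{I}_0)$ with $\bm{y}_k\to\bm{y}$, fix any $\bm{d}\in\widehat{\mathcal{D}}$, pick $\bm{x}_{k,0}\in\mathcal{I}_0$ with $\bm{y}_k=\bm{\phi}(t;\bm{x}_{k,0},\bm{d})$, extract a convergent subsequence $\bm{x}_{k,0}\to\bm{x}_0\in\mathcal{I}_0$, and apply Lemma \ref{conti} with $\bm{d}_k\equiv\bm{d}$ to get $\bm{y}=\bm{\phi}(t;\bm{x}_0,\bm{d})$; since $\bm{d}$ was arbitrary, $\bm{y}\in\Omega_2(t;\mathcal{I}_0)$.)

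The only genuinely delicate point is the handling of $\widehat{\mathcal{D}}$ in the $\Omega_1$ case: one must invoke Arzel\`a--Ascoli to extract a uniformly convergent subsequence of $(\bm{d}_k)$ and confirm the limit still lies in $\widehat{\mathcal{D}}$ (the uniform limit of $L$-Lipschitz functions into the compact $\mathcal{D}$ is again $L$-Lipschitz into $\mathcal{D}$), and then check that Lemma \ref{conti}'s pointwise-convergence hypothesis is met — which it is, since uniform convergence implies pointwise convergence. For $\Omega_2$ this subtlety does not even arise because the perturbation is held fixed, which is why the intersection-of-compacts argument is the most economical. Everything else is a routine closed-subset-of-compact argument.
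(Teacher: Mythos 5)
Your proposal is correct. The $\Omega_1(t;\mathcal{I}_0)$ half is essentially the paper's own argument: extract, via Arzel\`a--Ascoli and compactness of $\mathcal{I}_0$, a convergent subsequence of $(\bm{x}_{k,0},\bm{d}_k)$, check the limiting perturbation is still $L$-Lipschitz (the paper cites Proposition 1.2.4 of \cite{weaver1999} for this), and invoke Lemma \ref{conti}; the only cosmetic difference is that you get boundedness from $\Omega_1(t;\mathcal{I}_0)\subseteq\mathcal{X}$ in Assumption \ref{assump}, while the paper argues it from boundedness of $\mathcal{I}_0\times\mathcal{D}$ and smoothness of $\bm{f},\bm{g}$. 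For $\Omega_2(t;\mathcal{I}_0)$ your primary route genuinely differs: you use the identity $\Omega_2(t;\mathcal{I}_0)=\cap_{\bm{d}\in\widehat{\mathcal{D}}}\Omega(t;\mathcal{I}_0,\bm{d})$ stated after Definition \ref{re}, note that each $\Omega(t;\mathcal{I}_0,\bm{d})$ is the continuous image (Lemma \ref{conti} with $\bm{d}_k\equiv\bm{d}$) of the compact $\mathcal{I}_0$, hence closed, and conclude by intersecting closed sets inside the compact $\mathcal{X}$. The paper instead runs a sequential proof by contradiction: it assumes a limit point $\bm{y}\notin\Omega_2(t;\mathcal{I}_0)$, fixes a witnessing ``bad'' perturbation $\bm{d}_1$, extracts a convergent subsequence of initial states, and derives $\bm{y}=\bm{\phi}(t;\bm{x}_0,\bm{d}_1)$, a contradiction --- which is exactly the alternative you sketch in parentheses. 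Your intersection argument is the more economical and structurally transparent of the two (no contradiction, no Arzel\`a--Ascoli needed for this half, and it makes clear why the fixed-$\bm{d}$ case is easy), while the paper's version has the minor advantage of not appealing to the set-theoretic identity and of exhibiting explicitly which perturbation fails; both rest on the same continuity lemma, so the substance is equivalent.
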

\begin{proof}
Since $\mathcal{I}_0\times \mathcal{D}$ is bounded, and $\bm{f}\in \mathcal{C}^1(\mathcal{X}\times \mathcal{X}\times \mathcal{D})$ and $\bm{g}\in \mathcal{C}^1(\mathcal{X}\times \mathcal{D})$, we have $\Omega_1(t;\mathcal{I}_0)$ is bounded. Next we prove that $\Omega_1(t;\mathcal{I}_0)$ is closed. Let $(\bm{x}_i)_{i=1}^{\infty}$ be a sequence with $\bm{x}_i\in \Omega_1(t;\mathcal{I}_0)$ and $\lim_{i\rightarrow \infty} \bm{x}_i=\bm{y}$. Then there exists a sequence $(\bm{x}_{0,i},\bm{d}_i)_{i=1}^{\infty}$ with $\bm{x}_{0,i}\in \mathcal{I}_0$ and $\bm{d}_i\in \widehat{\mathcal{D}}$ such that $\bm{x}_i=\bm{\phi}(t;\bm{x}_{0,i},\bm{d}_i)$. Since $(\bm{x}_{0,i},\bm{d}_i)\in \mathcal{I}_0\times \widehat{\mathcal{D}}$, according to Arzel\`a-Ascoli Theorem \cite{rudin1964}, we have that there exists a uniformly convergent sub-sequence $(\bm{x}_{0,i_k},\bm{d}_{i_k})_{k=1}^{\infty}$ in $(\bm{x}_{0,i},\bm{d}_i)_{i=1}^{\infty}$. Let $\bm{x}_0=\lim_{k\rightarrow \infty} \bm{x}_{0,i_k}$ and $\bm{d}_0(s)=\lim_{k\rightarrow \infty} \bm{d}_{i_k}(s)$ point-wise over $s\in [0,t]$. Obviously, $\bm{x}_0\in \mathcal{I}_0$. According to Proposition 1.2.4 in \cite{weaver1999}, $\bm{d}_0\in \widehat{\mathcal{D}}$ holds. Thus, from Lemma \ref{conti}, we have \[\bm{y}=\lim_{k\rightarrow \infty}\bm{\phi}(t;\bm{x}_{0,i_k},\bm{d}_{i_k})=\bm{\phi}(t;\bm{x}_{0},\bm{d}_{0}).\] Since $\bm{x}_0\in \mathcal{I}_0$ and $\bm{d}_0\in \widehat{\mathcal{D}}$, $\bm{y}\in \Omega_1(t;\mathcal{I}_0)$ and thus $\Omega_1(t;\mathcal{I}_0)$ is closed.

In the following we prove that $\Omega_2(t;\mathcal{I}_0)$ is compact. Since $\Omega_2(t;\mathcal{I}_0)\subseteq \Omega_1(t;\mathcal{I}_0)$ and $\Omega_1(t;\mathcal{I}_0)$ is compact, $\Omega_2(t;\mathcal{I}_0)$ is bounded. Therefore, we just show that $\Omega_2(t;\mathcal{I}_0)$ is closed.

Assume that there exists a sequence $(\bm{x}_i)_{i=1}^{\infty}$ with  $\bm{x}_i\in \Omega_2(t;\mathcal{I}_0)$ and $\lim_{i\rightarrow \infty}\bm{x}_i=\bm{y} \notin \Omega_2(t;\mathcal{I}_0)$. Without loss of generality, suppose that $\bm{y}$ can not be visited at time $t>0$ by any trajectory to system \eqref{dde5} subject to the perturbation $\bm{d}_1\in \widehat{\mathcal{D}}$ starting from the initial set $\mathcal{I}_0$ at time $0$.

Since $\bm{x}_i \in \Omega_2(t;\mathcal{I}_0)$, there exists a corresponding sequence $(\bm{x}_{i,0})_{i=1}^{\infty}$ such that $\bm{x}_i=\bm{\phi}(t;\bm{x}_{i,0},\bm{d}_1)$. Also, $\lim_{i\rightarrow \infty}\bm{\phi}(t;\bm{x}_{i,0},\bm{d}_1)=\bm{y}$. Since $\mathcal{I}_0$ is compact, there exists a convergent sub-sequence $(\bm{x}_{i_k,0})_{k=1}^{\infty}$ in $(\bm{x}_{i,0})_{i=1}^{\infty}$. Let $\bm{x}_0=\lim_{k\rightarrow \infty} \bm{x}_{i_k,0}$. Obviously, $\bm{x}_0\in \mathcal{I}_0$. According to Lemma \ref{conti}, we obtain $\bm{y}=\bm{\phi}(t;\bm{x}_{0},\bm{d}_1)$, contradicting the assumption that $\bm{y}$ is not visited at time $t>0$ by any trajectory of system \eqref{dde5} subject to the perturbation $\bm{d}_1\in \widehat{\mathcal{D}}$ starting from the initial set $\mathcal{I}_0$ at time $0$. Thus, $\bm{y}\in \Omega_2(t;\mathcal{I}_0)$, implying $\Omega_2(t;\mathcal{I}_0)$ is closed.
Therefore, $\Omega_2(t;\mathcal{I}_0)$ is compact. \hfill $\Box$
\end{proof}

Lastly, we present the core findings of this note, which are separately stated in Theorem \ref{b} and Theorem \ref{b1}.  Theorem \ref{b} shows that any trajectory starting from the boundary of the initial set $\mathcal{I}_0$ does not enter the interior of the reach set $\Omega_2(t;\mathcal{I}_0)$ for $t\in [0,K\tau]$. Theorem \ref{b1} formulates that each state in the boundary of the reach set $\Omega_1(t;\mathcal{I}_0)$ (or, $\Omega_2(t;\mathcal{I}_0)$) can be visited at time $t\in [0,K\tau]$ by a trajectory originating from the boundary of the initial set $\mathcal{I}_0$.

\begin{theorem}
\label{b}
Suppose that $\bm{\phi}(t;\cdot,\bm{d}):\mathcal{I}_0\rightarrow \Omega(t;\mathcal{I}_0,\bm{d})$ is a homeomorphism between spaces $\mathcal{I}_0$ and $\Omega(t;\mathcal{I}_0,\bm{d})$ for $\bm{d}\in \widehat{\mathcal{D}}$ and $t\in [0,K\tau]$. 
If $\mathcal{I}_0$ is compact and $\bm{x}_0\in \partial \mathcal{I}_0$, $\bm{\phi}(t;\bm{x}_0,\bm{d})\notin \Omega_2(t;\mathcal{I}_0)^{\circ}$ for $\bm{d}\in \widehat{\mathcal{D}}$.
\end{theorem}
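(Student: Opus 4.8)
The plan is to show that the image under $\bm{\phi}(t;\cdot,\bm{d})$ of a point of $\partial\mathcal{I}_0$ necessarily lands on the \emph{boundary} of the single-perturbation reach set $\Omega(t;\mathcal{I}_0,\bm{d})$, and then to push this conclusion down to $\Omega_2$ via the inclusion $\Omega_2(t;\mathcal{I}_0)\subseteq\Omega(t;\mathcal{I}_0,\bm{d})$. Concretely, since $\Omega_2(t;\mathcal{I}_0)=\cap_{\bm{d}'\in\widehat{\mathcal{D}}}\Omega(t;\mathcal{I}_0,\bm{d}')$, for the fixed $\bm{d}$ at hand we have $\Omega_2(t;\mathcal{I}_0)\subseteq\Omega(t;\mathcal{I}_0,\bm{d})$, and hence by monotonicity of the interior operator $\Omega_2(t;\mathcal{I}_0)^{\circ}\subseteq\Omega(t;\mathcal{I}_0,\bm{d})^{\circ}$. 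Because the boundary and interior of any set are disjoint, it therefore suffices to prove $\bm{\phi}(t;\bm{x}_0,\bm{d})\in\partial\Omega(t;\mathcal{I}_0,\bm{d})$ for $\bm{x}_0\in\partial\mathcal{I}_0$.

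To establish the boundary-to-boundary property I would invoke Brouwer's invariance of domain. First note that $\Omega(t;\mathcal{I}_0,\bm{d})$ is compact, being the image of the compact set $\mathcal{I}_0$ under the map $\bm{\phi}(t;\cdot,\bm{d})$, which is continuous in the initial state (Lemma \ref{conti} with the perturbation fixed). Now let $\bm{x}\in\mathcal{I}_0^{\circ}$ and choose an open ball $B\subseteq\mathcal{I}_0$ around $\bm{x}$; the restriction $\bm{\phi}(t;\cdot,\bm{d})|_B\colon B\to\mathbb{R}^n$ is continuous and injective, so invariance of domain makes $\bm{\phi}(t;B,\bm{d})$ open in $\mathbb{R}^n$ and contained in $\Omega(t;\mathcal{I}_0,\bm{d})$, whence $\bm{\phi}(t;\bm{x},\bm{d})\in\Omega(t;\mathcal{I}_0,\bm{d})^{\circ}$. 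Applying the same argument to the continuous injective inverse $\bm{\phi}(t;\cdot,\bm{d})^{-1}$ gives the reverse inclusion, so $\bm{\phi}(t;\mathcal{I}_0^{\circ},\bm{d})=\Omega(t;\mathcal{I}_0,\bm{d})^{\circ}$. Since $\bm{\phi}(t;\cdot,\bm{d})$ is a bijection and both $\mathcal{I}_0$ and $\Omega(t;\mathcal{I}_0,\bm{d})$ are closed, they split as the disjoint unions $\mathcal{I}_0=\mathcal{I}_0^{\circ}\cup\partial\mathcal{I}_0$ and $\Omega(t;\mathcal{I}_0,\bm{d})=\Omega(t;\mathcal{I}_0,\bm{d})^{\circ}\cup\partial\Omega(t;\mathcal{I}_0,\bm{d})$, and the bijection matching interiors must match the complementary pieces, i.e. $\bm{\phi}(t;\partial\mathcal{I}_0,\bm{d})=\partial\Omega(t;\mathcal{I}_0,\bm{d})$. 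This is precisely the boundary-preservation statement already announced after Theorem \ref{tau}; in particular $\bm{\phi}(t;\bm{x}_0,\bm{d})\in\partial\Omega(t;\mathcal{I}_0,\bm{d})$. Combining with the reduction of the first paragraph yields $\bm{\phi}(t;\bm{x}_0,\bm{d})\notin\Omega(t;\mathcal{I}_0,\bm{d})^{\circ}\supseteq\Omega_2(t;\mathcal{I}_0)^{\circ}$, as required.

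The only genuinely nontrivial ingredient is the boundary-to-boundary step, and all the effort there is topological: it rests on invariance of domain together with the compactness of $\mathcal{I}_0$ (used to guarantee compactness, hence closedness, of $\Omega(t;\mathcal{I}_0,\bm{d})$ so that "interior" and "boundary" partition it). Everything else — monotonicity of the interior under set inclusion and the disjointness of a set's interior and boundary — is routine point-set topology, and the DDE dynamics re-enter only through the continuity provided by Lemma \ref{conti} and the homeomorphism hypothesis inherited from Theorem \ref{tau}.
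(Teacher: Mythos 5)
Your proof is correct and follows essentially the same route as the paper: reduce to the single-perturbation set via $\Omega_2(t;\mathcal{I}_0)\subseteq\Omega(t;\mathcal{I}_0,\bm{d})$ and use the fact that the homeomorphism sends $\partial\mathcal{I}_0$ to $\partial\Omega(t;\mathcal{I}_0,\bm{d})$, which is disjoint from the interior. The only difference is presentational: the paper argues by contradiction with a small ball and simply cites the boundary-preservation property stated after Theorem \ref{tau}, whereas you argue directly and supply the invariance-of-domain justification for that property explicitly.
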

\begin{proof}
Assume there exists $\bm{d}_1\in \widehat{\mathcal{D}}$ such that $\bm{\phi}(t;\bm{x}_0,\bm{d}_1)\in \Omega_2(t;\mathcal{I}_0)^{\circ}$. Then there exists $\delta>0$ such that $\{\bm{x}\in \mathbb{R}^n\mid \|\bm{x}-\bm{\phi}(t;\bm{x}_0,\bm{d}_1)\|\leq \delta\}\subset \Omega_2(t;\mathcal{I}_0).$

Since $\bm{\phi}(t;\cdot,\bm{d}_1):\mathcal{I}_0\rightarrow \Omega(t;\mathcal{I}_0,\bm{d}_1)$ is a homeomorphism between spaces $\mathcal{I}_0$ and $\Omega(t;\mathcal{I}_0,\bm{d}_1)$, we have $\bm{\phi}(t;\bm{x}_0,\bm{d}_1)\in \partial \Omega(t;\mathcal{I}_0,\bm{d}_1).$
Also, since $\Omega_2(t;\mathcal{I}_0)\subset \Omega(t;\mathcal{I}_0,\bm{d}_1)$, we obtain \[\{\bm{x}\in \mathbb{R}^n\mid \|\bm{\phi}(t;\bm{x}_0,\bm{d}_1)-\bm{x}\|\leq \delta\}\subset \Omega(t;\mathcal{I}_0,\bm{d}_1),\] implying 
\[\bm{\phi}(t;\bm{x}_0,\bm{d}_1)\in \Omega(t;\mathcal{I}_0,\bm{d}_1)^{\circ},\]  contradicting \[\bm{\phi}(t;\bm{x}_0,\bm{d}_1)\in \partial \Omega(t;\mathcal{I}_0,\bm{d}_1).\] Therefore, Theorem \ref{b} holds. \hfill $\Box$
\end{proof}

\begin{theorem}
\label{b1}
Suppose that $\mathcal{I}_0$ is compact and $\bm{\phi}(t;\cdot,\bm{d}):\mathcal{I}_0\rightarrow \Omega(t;\mathcal{I}_0,\bm{d})$ is a homeomorphism between spaces $\mathcal{I}_0$ and $\Omega(t;\mathcal{I}_0,\bm{d})$ for $t\in [0,K\tau]$ and $\bm{d}\in \widehat{\mathcal{D}}$. 1). If $\bm{x}\in \partial \Omega_1(t;\mathcal{I}_0)$, there exist $\bm{x}_0\in \mathcal{I}_0$ and $\bm{d}\in \widehat{\mathcal{D}}$ such that $\bm{x}=\bm{\phi}(t;\bm{x}_0,\bm{d})$. Moreover, $\bm{x}_0\in \partial \mathcal{I}_0$ holds. 2). If $\bm{x}\in \partial \Omega_2(t;\mathcal{I}_0)$, there exist $\bm{x}_0\in \partial \mathcal{I}_0$ and $\bm{d}\in \widehat{\mathcal{D}}$ such that $\bm{x}=\bm{\phi}(t;\bm{x}_0,\bm{d})$.
\end{theorem}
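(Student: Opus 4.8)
The plan is to treat the two claims separately; in both, Lemma~\ref{com} ensures that the relevant reach set is closed, so the given boundary point is actually attained by some trajectory, and the real work is to place the corresponding initial point on $\partial\mathcal{I}_0$. Throughout I will use the standard consequence of the homeomorphism hypothesis — recorded right after Theorem~\ref{tau} and a direct application of Brouwer's invariance of domain — that for each $\bm{d}\in\widehat{\mathcal{D}}$ the map $\bm{\phi}(t;\cdot,\bm{d})$ sends $\mathcal{I}_0^{\circ}$ onto $\Omega(t;\mathcal{I}_0,\bm{d})^{\circ}$ and $\partial\mathcal{I}_0$ onto $\partial\Omega(t;\mathcal{I}_0,\bm{d})$; equivalently, writing $\bm{x}=\bm{\phi}(t;\bm{x}_0,\bm{d})$, one has $\bm{x}\in\Omega(t;\mathcal{I}_0,\bm{d})^{\circ}$ iff $\bm{x}_0\in\mathcal{I}_0^{\circ}$. (If $\mathcal{I}_0^{\circ}=\emptyset$ then $\partial\mathcal{I}_0=\mathcal{I}_0$ and both claims are immediate, so assume $\mathcal{I}_0^{\circ}\neq\emptyset$.) For claim 1), closedness of $\Omega_1(t;\mathcal{I}_0)$ (Lemma~\ref{com}) gives $\bm{x}\in\Omega_1(t;\mathcal{I}_0)$, hence $\bm{x}=\bm{\phi}(t;\bm{x}_0,\bm{d})$ for some $\bm{x}_0\in\mathcal{I}_0$, $\bm{d}\in\widehat{\mathcal{D}}$; if $\bm{x}_0\in\mathcal{I}_0^{\circ}$ then $\bm{x}$ lies in the open set $\Omega(t;\mathcal{I}_0,\bm{d})^{\circ}\subseteq\Omega_1(t;\mathcal{I}_0)$, so $\bm{x}\in\Omega_1(t;\mathcal{I}_0)^{\circ}$, contradicting $\bm{x}\in\partial\Omega_1(t;\mathcal{I}_0)$; hence $\bm{x}_0\in\partial\mathcal{I}_0$. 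This part is routine.

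Claim 2) is the substantive one, and I will argue by contradiction. By Lemma~\ref{com}, $\bm{x}\in\Omega_2(t;\mathcal{I}_0)=\bigcap_{\bm{d}\in\widehat{\mathcal{D}}}\Omega(t;\mathcal{I}_0,\bm{d})$, so the point $\bm{x}_0(\bm{d}):=\big(\bm{\phi}(t;\cdot,\bm{d})\big)^{-1}(\bm{x})\in\mathcal{I}_0$ is well defined for every $\bm{d}$; suppose $\bm{x}_0(\bm{d})\in\mathcal{I}_0^{\circ}$ — equivalently $\bm{x}\in\Omega(t;\mathcal{I}_0,\bm{d})^{\circ}$ — for every $\bm{d}\in\widehat{\mathcal{D}}$. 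Since $\bm{x}\in\partial\Omega_2(t;\mathcal{I}_0)$, pick $\bm{z}_k\to\bm{x}$ with $\bm{z}_k\notin\Omega_2(t;\mathcal{I}_0)$, and for each $k$ pick $\bm{d}_k\in\widehat{\mathcal{D}}$ with $\bm{z}_k\notin\Omega(t;\mathcal{I}_0,\bm{d}_k)$. Because $\bm{x}\in\Omega(t;\mathcal{I}_0,\bm{d}_k)^{\circ}$, $\bm{z}_k\notin\Omega(t;\mathcal{I}_0,\bm{d}_k)$, and $\Omega(t;\mathcal{I}_0,\bm{d}_k)$ is closed (the continuous image of the compact $\mathcal{I}_0$), the segment from $\bm{x}$ to $\bm{z}_k$ meets $\partial\Omega(t;\mathcal{I}_0,\bm{d}_k)=\bm{\phi}(t;\partial\mathcal{I}_0,\bm{d}_k)$ at a point $\bm{p}_k=\bm{\phi}(t;\bm{w}_k,\bm{d}_k)$ with $\bm{w}_k\in\partial\mathcal{I}_0$ and $\|\bm{p}_k-\bm{x}\|\le\|\bm{z}_k-\bm{x}\|\to 0$. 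Using compactness of $\widehat{\mathcal{D}}$ (Arzel\`a-Ascoli) pass to a subsequence with $\bm{d}_{k_j}\to\bm{d}^*\in\widehat{\mathcal{D}}$ uniformly, hence pointwise, and using compactness of $\partial\mathcal{I}_0$ a further subsequence with $\bm{w}_{k_j}\to\bm{w}^*\in\partial\mathcal{I}_0$; Lemma~\ref{conti} then gives $\bm{x}=\lim_j\bm{p}_{k_j}=\bm{\phi}(t;\bm{w}^*,\bm{d}^*)\in\bm{\phi}(t;\partial\mathcal{I}_0,\bm{d}^*)=\partial\Omega(t;\mathcal{I}_0,\bm{d}^*)$, contradicting $\bm{x}\in\Omega(t;\mathcal{I}_0,\bm{d}^*)^{\circ}$. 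Therefore $\bm{x}_0(\bm{d})\in\partial\mathcal{I}_0$ for some $\bm{d}\in\widehat{\mathcal{D}}$, and $\bm{x}_0:=\bm{x}_0(\bm{d})$ with this $\bm{d}$ is as required.

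The main obstacle is precisely the step in claim 2) that passes from ``$\bm{x}$ interior to each member of the family $\{\Omega(t;\mathcal{I}_0,\bm{d})\}_{\bm{d}\in\widehat{\mathcal{D}}}$'' to ``$\bm{x}$ interior to their intersection $\Omega_2(t;\mathcal{I}_0)$'': this is false for general infinite families and is rescued here only by the compactness of $\widehat{\mathcal{D}}$ together with the joint continuity of the flow in $(\bm{x}_0,\bm{d})$ provided by Lemma~\ref{conti}. The supporting facts are elementary but should be stated with care, namely that a line segment from an interior point of a closed set to an exterior point crosses the boundary, and that the first such crossing stays within $\|\bm{z}_k-\bm{x}\|$ of $\bm{x}$; one must also check that the two nested subsequence extractions (first in $\widehat{\mathcal{D}}$, then in $\partial\mathcal{I}_0$) remain compatible with the pointwise-convergence hypothesis of Lemma~\ref{conti}, which they do since uniform convergence of $\bm{d}_{k_j}$ implies pointwise convergence. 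Claim 1) and the degenerate case $\mathcal{I}_0^{\circ}=\emptyset$ are straightforward by comparison.
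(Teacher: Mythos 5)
Your proof is correct, and while part 1) coincides with the paper's own argument (compactness of $\Omega_1(t;\mathcal{I}_0)$ from Lemma \ref{com}, then the interior-onto-interior property of the homeomorphism to exclude $\bm{x}_0\in\mathcal{I}_0^{\circ}$), your part 2) takes a genuinely different route. The paper argues through the inverse maps: it first proves that $\bm{\phi}^{-1}(t;\bm{x},\bm{d})$ is (sequentially) continuous in $(\bm{x},\bm{d})$ via a subsequence/uniqueness argument based on Lemma \ref{conti}, then uses compactness of $\widehat{\mathcal{D}}$ to show that $\inf_{\bm{d}\in\widehat{\mathcal{D}}}\texttt{dist}(\bm{\phi}^{-1}(t;\bm{x},\bm{d}),\partial\mathcal{I}_0)$ is attained and hence bounded below by some $\epsilon_1>0$ under the contradiction hypothesis, and finally a finite-subcover argument over $\widehat{\mathcal{D}}$ to produce a ball $B(\bm{x};\delta)\subseteq\Omega_2(t;\mathcal{I}_0)$, contradicting $\bm{x}\in\partial\Omega_2(t;\mathcal{I}_0)$. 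You work only with the forward flow: approximate $\bm{x}$ by $\bm{z}_k\notin\Omega_2(t;\mathcal{I}_0)$, use closedness of each $\Omega(t;\mathcal{I}_0,\bm{d}_k)$ and the boundary-onto-boundary property to obtain crossing points $\bm{p}_k=\bm{\phi}(t;\bm{w}_k,\bm{d}_k)$ with $\bm{w}_k\in\partial\mathcal{I}_0$ and $\bm{p}_k\rightarrow\bm{x}$, and then Arzel\`a--Ascoli together with Lemma \ref{conti} to pass to the limit $\bm{x}=\bm{\phi}(t;\bm{w}^*,\bm{d}^*)$, $\bm{w}^*\in\partial\mathcal{I}_0$, $\bm{d}^*\in\widehat{\mathcal{D}}$. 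This is leaner: it dispenses entirely with the continuity of the inverse map and with the covering step, the two technical workhorses of the paper's proof, and in fact your limit identity already exhibits the required $\bm{x}_0\in\partial\mathcal{I}_0$ and perturbation $\bm{d}^*$ directly, so the contradiction framing could be dropped (the interiority assumption is only used for the segment-crossing step, where mere membership $\bm{x}\in\Omega(t;\mathcal{I}_0,\bm{d}_k)$, guaranteed by $\bm{x}\in\Omega_2(t;\mathcal{I}_0)$, already suffices). What the paper's longer route yields as a by-product is the joint continuity of $\bm{\phi}^{-1}(t;\bm{x},\bm{d})$ and an explicit radius $\delta$ with $B(\bm{x};\delta)\subseteq\Omega_2(t;\mathcal{I}_0)$, i.e.\ a quantitative interiority statement; neither is needed for the theorem as stated, so your argument stands on its own, provided you keep the explicit appeal to invariance of domain (which the paper also relies on, asserted after Theorem \ref{tau}) to justify that the homeomorphism maps $\partial\mathcal{I}_0$ onto $\partial\Omega(t;\mathcal{I}_0,\bm{d})$ and $\mathcal{I}_0^{\circ}$ onto $\Omega(t;\mathcal{I}_0,\bm{d})^{\circ}$.
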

\begin{proof}
1) From Lemma \ref{com}, $\Omega_1(t;\mathcal{I}_0)$ is compact. Thus, there exist $\bm{x}_0\in \mathcal{I}_0$ and $\bm{d}\in \widehat{\mathcal{D}}$ such that $\bm{x}=\bm{\phi}(t;\bm{x}_0,\bm{d})$.

Assume that $\bm{x}\in \partial \Omega_1(t;\mathcal{I}_0)$ is visited at time $t>0$ by a trajectory to system \eqref{dde5} subject to the perturbation $\bm{d}\in \widehat{\mathcal{D}}$ originating from $\bm{x}_0\in \mathcal{I}_0^{\circ}$, i.e., $\bm{x}=\bm{\phi}(t;\bm{x}_0,\bm{d})$.

Since $\bm{\phi}(t;\cdot,\bm{d}):\mathcal{I}_0\rightarrow \Omega(t;\mathcal{I}_0,\bm{d})$ is a homeomorphism between spaces $\mathcal{I}_0$ and $\Omega(t;\mathcal{I}_0,\bm{d})$, we have $\bm{x}\in \Omega(t;\mathcal{I}_0,\bm{d})^{\circ}$. Also, since $\Omega(t;\mathcal{I}_0,\bm{d})\subset \Omega_1(t;\mathcal{I}_0)$, $\bm{x}\in \Omega_1(t;\mathcal{I}_0)^{\circ}$ holds, which contradicts $\bm{x}\in \partial \Omega_1(t;\mathcal{I}_0)$. Therefore, $\bm{x}_0\in  \partial \mathcal{I}_0$.

2) According to Lemma \ref{com}, $\Omega_2(t;\mathcal{I}_0)$ is compact. Thus, for any $\bm{d}\in \widehat{\mathcal{D}}$, there exists $\bm{x}^{\bm{d}}_0 \in \mathcal{I}_0$ such that $\bm{x}=\bm{\phi}(t;\bm{x}^{\bm{d}}_0,\bm{d})$. 

Assume that $\bm{x}^{\bm{d}}_0 \in \mathcal{I}_0^{\circ}$ for $\bm{d}\in \widehat{\mathcal{D}}$. Obviously, $\bm{x}\in \Omega_1(t;\mathcal{I}_0)^{\circ}$ since $\bm{x}\in \Omega(t;\mathcal{I}_0,\bm{d})^{\circ}$ and $\Omega(t;\mathcal{I}_0,\bm{d})\subseteq \Omega_1(t;\mathcal{I}_0)$. We denote the reverse map of $\bm{\phi}(t;\cdot,\bm{d}):\mathcal{I}_0\rightarrow \Omega(t;\mathcal{I}_0,\bm{d})$ by $\bm{\phi}^{-1}(t;\cdot,\bm{d}):\Omega(t;\mathcal{I}_0,\bm{d})\rightarrow \mathcal{I}_0$.
Therefore, $\bm{\phi}^{-1}(t;\bm{x},\bm{d})\in \mathcal{I}_0^{\circ}$ for $\bm{d}\in \widehat{\mathcal{D}}$.  It is obvious that \[\inf_{\bm{d}\in \widehat{\mathcal{D}}}\texttt{dist}(\bm{\phi}^{-1}(t;\bm{x},\bm{d}),\partial \mathcal{I}_0)\geq 0,\] where 
$\texttt{dist}(\bm{\phi}^{-1}(t;\bm{x},\bm{d}), \partial \mathcal{I}_0) = \inf_{\bm{z}\in \partial \mathcal{I}_0}\|\bm{\phi}^{-1}(t;\bm{x},\bm{d})-\bm{z}\|.$

Next, we show  that $\bm{\phi}^{-1}(t;\bm{x},\bm{d})$ is continuous over $\bm{d}\in \widehat{\mathcal{D}}$ and $\bm{x}\in \Omega_1(t;\mathcal{I}_0)$.  Let $\lim_{k \rightarrow \infty }\bm{d}_k(s)=\bm{d}(s)$ point-wise for $s\in [0,t]$ and $\lim_{k\rightarrow \infty}\bm{x}_k=\bm{x}$ with $\bm{d}_k\in \widehat{\mathcal{D}}$ and $\bm{x}_k\in \Omega_1(t;\mathcal{I}_0)$, $\bm{x}_0=\bm{\phi}^{-1}(t;\bm{x},\bm{d})$ and $\bm{x}_{0,k}=\bm{\phi}^{-1}(t;\bm{x}_k,\bm{d}_k)$, i.e., $\bm{x}=\bm{\phi}(t;\bm{x}_0,\bm{d})$ and $\bm{x}_k=\bm{\phi}(t;\bm{x}_{0,k},\bm{d}_k)$. We have that $\bm{x}_0\in \mathcal{I}_0$ and $\bm{x}_{0,k}\in \mathcal{I}_0$. We first prove that the sequence $(\bm{x}_{0,k})_{k=1}^{\infty}$ converges. If $(\bm{x}_{0,k})_{k=1}^{\infty}$ diverges, there exist two convergent sub-sequences $(\bm{x}_{0,k_{i}^1})_{i=1}^{\infty}$ and $(\bm{x}_{0,k_{i}^2})_{i=1}^{\infty}$ such that 
\begin{equation}
\label{conver}
\lim_{i\rightarrow \infty} \bm{x}_{0,k_{i}^1}\neq \lim_{i\rightarrow \infty} \bm{x}_{0,k_{i}^2}. 
\end{equation}
Let $\lim_{i\rightarrow \infty} \bm{x}_{0,k_{i}^1}=\bm{a}$ and $\lim_{i\rightarrow \infty} \bm{x}_{0,k_{i}^2}=\bm{b}$. Obviously, $\bm{a}\in \mathcal{I}_0$ and $\bm{b}\in \mathcal{I}_0$. However, Lemma \ref{conti} implies 
\[\bm{x}=\lim_{i \rightarrow \infty} \bm{\phi}(t;\bm{x}_{0,k_{i}^1},\bm{d}_{k_{i}^1})=\bm{\phi}(t;\bm{a},\bm{d})=\bm{\phi}(t;\bm{b},\bm{d})\] and thus $\bm{a}=\bm{b}$, contradicting \eqref{conver}. Thus, 
the sequence $(\bm{x}_{0,k})_{k=1}^{\infty}$ converges, and further Lemma \ref{conti} implies
\[\bm{x}=\lim_{k \rightarrow \infty} \bm{\phi}(t;\bm{x}_{0,k},\bm{d}_k)=\bm{\phi}(t;\lim_{k \rightarrow \infty} \bm{x}_{0,k},\lim_{k \rightarrow \infty} \bm{d}_k)=\bm{\phi}(t;\lim_{k \rightarrow \infty} \bm{x}_{0,k},\bm{d}),\]
implying $\bm{x}_0=\lim_{k \rightarrow \infty} \bm{x}_{0,k}$ and thus $\lim_{k \rightarrow \infty}\bm{\phi}^{-1}(t;\bm{x}_k,\bm{d}_k)=\bm{\phi}^{-1}(t;\bm{x},\bm{d})$. We below show that there exists $\bm{d}'\in \widehat{\mathcal{D}}$ such that \[\texttt{dist}(\bm{\phi}^{-1}(t;\bm{x},\bm{d}'),\partial \mathcal{I}_0)=\inf_{\bm{d}\in \widehat{\mathcal{D}}}\texttt{dist}(\bm{\phi}^{-1}(t;\bm{x},\bm{d}),\partial \mathcal{I}_0).\]
 Since $\widehat{\mathcal{D}}$ is compact, $\texttt{dist}(\cdot,\partial \mathcal{I}_0):\mathbb{R}^n \rightarrow \mathbb{R}$ is continuous and $\bm{\phi}^{-1}(t;\bm{x},\bm{d})$ is continuous over $\bm{d}\in \widehat{\mathcal{D}}$,  there exists $\bm{d}'\in \widehat{\mathcal{D}}$ such that \[\texttt{dist}(\bm{\phi}^{-1}(t;\bm{x},\bm{d}'),\partial \mathcal{I}_0)=\inf_{\bm{d}\in \widehat{\mathcal{D}}}\texttt{dist}(\bm{\phi}^{-1}(t;\bm{x},\bm{d}),\partial \mathcal{I}_0).\] The assumption that $\bm{\phi}^{-1}(t;\bm{x},\bm{d})\in \mathcal{I}_0^{\circ}$ for $\bm{d}\in \widehat{\mathcal{D}}$ implies $\bm{\phi}^{-1}(t;\bm{x},\bm{d}')\in \mathcal{I}_0^{\circ}$ and thus there exists $\epsilon_1>0$ such that $\texttt{dist}(\bm{\phi}^{-1}(t;\bm{x},\bm{d}'),\partial \mathcal{I}_0)\geq \epsilon_1.$ Thus, \[\texttt{dist}(\bm{\phi}^{-1}(t;\bm{x},\bm{d}),\partial \mathcal{I}_0)\geq \epsilon_1\] for $\bm{d}\in \widehat{\mathcal{D}}.$

Therefore, for every $\bm{d}\in \widehat{\mathcal{D}}$, there exists $\delta_{\bm{d}}>0$ such that \[\bm{\phi}^{-1}(t;\bm{y},\tilde{\bm{d}})\in \mathcal{I}_0\] for $\bm{y}\in B(\bm{x};\delta_{\bm{d}})$ and $\tilde{\bm{d}}\in B(\bm{d};\delta_{\bm{d}})$, where \[B(\bm{x};\delta_{\bm{d}})=\{\bm{y}\in \mathbb{R}^n\mid \|\bm{y}-\bm{x}\|\leq \delta_{\bm{d}}\}\] and \[B(\bm{d};\delta_{\bm{d}})=\{\tilde{\bm{d}}\in \widehat{\mathcal{D}}\mid \|\bm{d}(s)-\tilde{\bm{d}}(s)\|\leq \delta_{\bm{d}} \text{~for~} s \in [0,t]\}.\] Since $\widehat{\mathcal{D}}$ is compact with respect to uniform norm, there exists a finite sequence $(\delta_{\bm{d}_i})_{i=1}^N$ such that \[\widehat{\mathcal{D}}\subseteq \cup_{i=1}^N B(\bm{d}_i;\delta_{\bm{d}_i})\] and \[\bm{\phi}^{-1}(t;\bm{y},\tilde{\bm{d}})\in \mathcal{I}_0\] for $\bm{y}\in B(\bm{x};\delta_{\bm{d}_i})$ and $\tilde{\bm{d}}\in B(\bm{d}_i;\delta_{\bm{d}_i})$. Denoting $\delta:=\min_{1\leq i\leq N} \{\delta_{\bm{d}_i}\}$, we immediately have that
$\bm{\phi}^{-1}(t;\bm{y},\tilde{\bm{d}})\in \mathcal{I}_{0}$ for $\bm{y}\in B(\bm{x};\delta)$ and $\tilde{\bm{d}}\in \widehat{\mathcal{D}}$. Consequently, \[B(\bm{x};\delta)\subset \Omega_2(t,\mathcal{I}_0),\] which contradicts $\bm{x}\in \partial \Omega_2(t,\mathcal{I}_0)$. Therefore, we have that if $\bm{x}\in \partial \Omega_2(t;\mathcal{I}_0)$, there exist $\bm{x}_0\in \partial \mathcal{I}_0$ and $\bm{d}\in \widehat{\mathcal{D}}$ such that $\bm{x}=\bm{\phi}(t;\bm{x}_0,\bm{d})$. \hfill $\Box$
\end{proof}

From Theorem \ref{b1}, we conclude that the boundaries of both reach sets $\Omega_1(t;\mathcal{I}_0)$ and $\Omega_2(t;\mathcal{I}_0)$ are included in the reach set $\Omega_1(t;\partial \mathcal{I}_0)$ of the initial set's boundary, that is, $\partial \Omega_1(t;\mathcal{I}_0)\subseteq \Omega_1(t;\partial \mathcal{I}_0)$ and $\partial \Omega_2(t;\mathcal{I}_0)\subseteq \Omega_1(t;\partial \mathcal{I}_0)$.
According to Lemma 1 in \cite{Xue16}, Lemma \ref{com} and Theorem \ref{b1}, an over-approximation of $\Omega_1(t;\mathcal{I}_0)$ can be constructed by a set of the polytopic form, which includes $\Omega_1(t;\partial \mathcal{I}_0)$. Analogous to the algorithm in \cite{Xue16}, an under-approximation of $\Omega_2(t;\mathcal{I}_0)$ could be constructed by a set of the polytopic form, which excludes $\Omega_1(t;\partial \mathcal{I}_0)$. The latter statement is justified by Lemma \ref{under} below along with Lemma \ref{com} and Theorem \ref{b1}. Moreover, we observe from Theorem \ref{b} that the means of excluding $\Omega_1(t;\partial \mathcal{I}_0)$ to estimate an under-approximation of $\Omega_2(t; \mathcal{I}_0)$ does not induce extra conservativeness since no trajectory starting from $\partial \mathcal{I}_0$ at time $0$ enters the interior of the reach set $\Omega_2(t; \mathcal{I}_0)$ for $t\in [0,K\tau]$.
\begin{lemma}
\label{under}
Assume that $\mathcal{O}\subseteq \mathbb{R}^n$ is a compact set and $\mathcal{P}\subseteq \mathbb{R}^n$ is a compact convex polytope. If the boundary of $\mathcal{O}$ is a subset of the enclosure of the complement of $\mathcal{P}$, and the intersection of the interior of $\mathcal{O}$ and the interior of $\mathcal{P}$ is not empty, then $\mathcal{P}$ is an under-approximation of $\mathcal{O}$.
\end{lemma}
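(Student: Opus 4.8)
Read ``the enclosure of the complement of $\mathcal{P}$'' as the closure $\overline{\mathcal{P}^c}$; since $\mathcal{P}$ is a compact convex polytope it is closed, so $\overline{\mathcal{P}^c}=\mathbb{R}^n\setminus\mathcal{P}^{\circ}$ and the first hypothesis is equivalent to $\partial\mathcal{O}\cap\mathcal{P}^{\circ}=\emptyset$. Recalling that ``$\mathcal{P}$ is an under-approximation of $\mathcal{O}$'' means $\mathcal{P}\subseteq\mathcal{O}$, the plan is to prove this inclusion by contradiction, using a straight segment that joins a point common to $\mathcal{O}^{\circ}$ and $\mathcal{P}^{\circ}$ to a hypothetical point of $\mathcal{P}$ lying outside $\mathcal{O}$, and then locating on that segment a point of $\partial\mathcal{O}$ that also lies in $\mathcal{P}^{\circ}$.

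First I would use the second hypothesis to fix $\bm{q}\in\mathcal{O}^{\circ}\cap\mathcal{P}^{\circ}$, and suppose toward a contradiction that some $\bm{p}\in\mathcal{P}\setminus\mathcal{O}$ exists. Consider the segment $\gamma(\lambda)=(1-\lambda)\bm{q}+\lambda\bm{p}$, $\lambda\in[0,1]$. Since $\mathcal{P}$ is convex, $\bm{q}\in\mathcal{P}^{\circ}$ and $\bm{p}\in\mathcal{P}=\overline{\mathcal{P}}$, the standard convexity fact that the segment from an interior point of a convex set to any point of its closure stays in the interior (except possibly at the far endpoint) gives $\gamma(\lambda)\in\mathcal{P}^{\circ}$ for all $\lambda\in[0,1)$. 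Invoking (or re-deriving) this fact cleanly is the one genuinely geometric ingredient, and it is the step I would be most careful about; note it uses only convexity and closedness of $\mathcal{P}$, not that it is a polytope — the polytope hypothesis is just what the construction in \cite{Xue16} delivers.

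Next I would exploit that $\mathcal{O}$ is compact, hence closed, so $\mathbb{R}^n\setminus\mathcal{O}$ is open. Put $\lambda_0=\inf\{\lambda\in[0,1]\mid\gamma(\lambda)\notin\mathcal{O}\}$. Because $\gamma(0)=\bm{q}\in\mathcal{O}^{\circ}$, an entire initial sub-segment lies in $\mathcal{O}$, so $\lambda_0>0$; because $\gamma(\lambda)\in\mathcal{O}$ for every $\lambda<\lambda_0$ and $\mathcal{O}$ is closed, $\gamma(\lambda_0)\in\mathcal{O}$, which forces $\lambda_0<1$ since $\gamma(1)=\bm{p}\notin\mathcal{O}$; and choosing $\mu_k\downarrow\lambda_0$ with $\gamma(\mu_k)\notin\mathcal{O}$ shows $\gamma(\lambda_0)\in\overline{\mathbb{R}^n\setminus\mathcal{O}}$. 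Hence $\gamma(\lambda_0)\in\mathcal{O}\cap\overline{\mathbb{R}^n\setminus\mathcal{O}}=\partial\mathcal{O}$. Since $\lambda_0\in(0,1)$, the convexity fact of the previous step also gives $\gamma(\lambda_0)\in\mathcal{P}^{\circ}$, so $\gamma(\lambda_0)\in\partial\mathcal{O}\cap\mathcal{P}^{\circ}$, contradicting $\partial\mathcal{O}\cap\mathcal{P}^{\circ}=\emptyset$. Therefore $\mathcal{P}\setminus\mathcal{O}=\emptyset$, i.e., $\mathcal{P}\subseteq\mathcal{O}$.

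I expect the only real obstacle to be the endpoint bookkeeping: verifying $0<\lambda_0<1$ and that $\gamma(\lambda_0)$ is at once a boundary point of $\mathcal{O}$ and an interior point of $\mathcal{P}$. The topological core (connectedness of the segment, closedness of $\mathcal{O}$, the convexity lemma) is routine, so the write-up should be short; the main care is in stating the convexity fact precisely and applying it at the interior parameter $\lambda_0$ rather than at the endpoint $\lambda=1$.
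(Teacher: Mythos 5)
Your proof is correct. Note that the paper itself does not spell out a proof of this lemma: it only remarks that the statement is a variant of Lemma 2 in \cite{Xue16} (where $\mathcal{O}$ is additionally assumed simply connected) and that it ``can be assured by following the proof'' there, so there is no line-by-line argument to compare against. Your segment argument --- pick $\bm{q}\in\mathcal{O}^{\circ}\cap\mathcal{P}^{\circ}$, suppose $\bm{p}\in\mathcal{P}\setminus\mathcal{O}$, use the line-segment principle for the convex closed set $\mathcal{P}$ to keep $\gamma(\lambda)$ in $\mathcal{P}^{\circ}$ for $\lambda<1$, and locate the first exit point from the closed set $\mathcal{O}$ at a parameter $\lambda_0\in(0,1)$, giving a point of $\partial\mathcal{O}\cap\mathcal{P}^{\circ}$ --- is a clean, self-contained replacement for the cited connectedness-style argument: in effect you realize the connectedness of $\mathcal{P}^{\circ}$ concretely through segments, which is exactly what convexity buys. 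Your bookkeeping at the endpoints ($\lambda_0>0$ from $\bm{q}\in\mathcal{O}^{\circ}$, $\gamma(\lambda_0)\in\mathcal{O}$ by closedness, $\lambda_0<1$ since $\gamma(1)\notin\mathcal{O}$, and $\gamma(\lambda_0)\in\overline{\mathcal{O}^{c}}$ from approximating the infimum) is sound, and your reading of ``enclosure of the complement'' as $\overline{\mathcal{P}^{c}}=\mathbb{R}^n\setminus\mathcal{P}^{\circ}$ matches the paper's usage. Two side observations you make are also correct and worth keeping: the argument never uses the polytope structure (closed and convex suffices), and it never uses simple connectedness of $\mathcal{O}$, which substantiates the paper's claim that dropping that hypothesis from Lemma 2 of \cite{Xue16} is harmless.
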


Lemma \ref{under} is a variant of Lemma 2 in \cite{Xue16}, in which $\mathcal{O}$ is required to be a simply connected compact set.  However, Lemma \ref{under} still can be assured by following the proof of Lemma 2 in \cite{Xue16}. Due to the space limitation, we omit its proof herein.

\subsection{Reach Sets Computation}
\label{rsc}
In this subsection we give a brief introduction on the set-boundary reachability method for over-approximating the set $\Omega_1(t;\mathcal{I}_0)$ and  under-approximating the set $\Omega_2(t;\mathcal{I}_0)$ for system \eqref{dde5} with $\tau$ satisfying condition \eqref{uppp} in Theorem 1, although one observes that this is a direct extension of the method in \cite{XueMFCLZ17}. Note that the set-boundary reachability method just guides existing reachability methods to perform computations on the initial set’s boundary. It helps to reduce computational
burden through reduction of volume of the initial set when performing reachability analysis, especially for cases with
large initial sets and/or large time horizons. Demonstrating its benefits is not the focus of this note since they were fully demonstrated in our previous works \cite{IEEE16,XueMFCLZ17}.

Assume that the initial set's boundary is represented as an union of $l$ subsets, that is, $\partial \mathcal{I}_0=\cup_{i=1}^l \mathcal{I}_{0,i}$. For $t \in [0,\tau]$, system \eqref{dde5} is governed by ODE $\dot{\bm{x}}(t)=\bm{g}(\bm{x}(t),\bm{d}(t))$. Therefore, we can apply reach set computation methods such as \cite{Althoff08} to the computation of an over-approximation $O(t;\partial \mathcal{I}_0)$ of the reach set of the initial set's boundary $\partial \mathcal{I}_0$ for $t \in [0,\tau]$, where $O(t;\partial \mathcal{I}_0)=\cup_{i=1}^l O(t;\mathcal{I}_{0,i})\cap \mathcal{X}$. According to Theorem \ref{b1}, the boundaries of reach sets $\Omega_1(t;\mathcal{I}_0)$ and $\Omega_2(t;\mathcal{I}_0)$ can be visited by trajectories of system \eqref{dde5} originating from the initial set's boundary, i.e., 
$\partial \Omega_1(t; \mathcal{I}_0 )\subseteq O(t;\partial \mathcal{I}_{0})$ and $\partial \Omega_2(t; \mathcal{I}_0 )\subseteq O(t;\partial \mathcal{I}_{0})$. According to Lemma 1 in \cite{Xue16} and Lemma \ref{under}, an over-approximation of the reach set $\Omega_1(t;\mathcal{I}_0)$ and under-approximation of the reach set $\Omega_2(t;\mathcal{I}_0)$ could be constructed by including and excluding the computed set $O(t;\partial \mathcal{I}_0)$ of the reach set of the initial set's boundary respectively. Based on above computations for the initial trajectory segment up to time $\tau$, the following steps are used to compute an over- and under-approximation of the reach sets $\Omega_1(t;\mathcal{I}_0)$ and $\Omega_2(t;\mathcal{I}_0)$ respectively for $t\in [k\tau,(k+1)\tau]$, $k=1,\ldots,K-1$.

1). We compute an over-approximation $O(t;\partial \mathcal{I}_0)$ of the reach set of the initial set's boundary at time $t$. This can be done in the following way: we compute an over-approximation $O(t;\mathcal{I}_{0,i})$ of the reach set $\Omega_1(t;\mathcal{I}_{0,i})$ for system (\ref{dde5}) with the initial set $O(k\tau;\mathcal{I}_{0,i})$, $i=1,\ldots,l$. This over-approximation can be computed by applying existing reachability analysis methods for ODEs subject to time-varying perturbation inputs $\bm{x}_{\tau}\in O(t-\tau;\mathcal{I}_{0,i})\cap  \mathcal{X}$ and $\bm{d}\in \widehat{\mathcal{D}}$. Therefore, \[O(t;\partial \mathcal{I}_{0})=\cup_{i=1}^m O(t;\mathcal{I}_{0,i})\cap \mathcal{X}\] is an over-approximation of the reach set of the initial set's boundary, i.e., $\Omega_1(t;\partial \mathcal{I}_0)\subseteq O(t;\partial \mathcal{I}_{0})$. According to Theorem \ref{b1}, the boundaries of reach sets $\Omega_1(t;\mathcal{I}_0)$ and $\Omega_2(t;\mathcal{I}_0)$ can be visited by trajectories of system \eqref{dde5} originating from the initial set's boundary. Consequently, $\partial \Omega_1(t; \mathcal{I}_0 )\subseteq O(t;\partial \mathcal{I}_{0})$ and $\partial \Omega_2(t; \mathcal{I}_0 )\subseteq O(t;\partial \mathcal{I}_{0})$.
 
The following two steps compute an over-approximation of the reach set $\Omega_1(t;\mathcal{I}_0)$ and an under-approximation of the reach set $\Omega_2(t;\mathcal{I}_0)$ by including and excluding the obtained over-approximation $O(t;\partial \mathcal{I}_0)$ respectively.

2). We compute a compact polytope $\mathcal{O}_{k,t}$ such that it covers the computed over-approximation $O(t;\partial \mathcal{I}_{0})$. The set $\mathcal{O}_{k,t}$ is an over-approximation of the reach set $\Omega_1(t;\partial \mathcal{I}_0)$ according to Lemma 1 in \cite{Xue16}.

3). We compute a compact polytope $\mathcal{U}_{k,t}$ satisfying these two conditions: 3a). The computed over-approximation $O(t;\partial \mathcal{I}_0)$ is a subset of the enclosure of its complement. Based on linear programs as in \cite{Xue16}, such $\mathcal{U}_{k,t}$ can be computed by shrinking the over-approximation $\mathcal{O}_{k,t}$ obtained in step 2) such that the enclosure of its complement includes the over-approximation $O(t;\partial \mathcal{I}_0)$.
3b). Its interior intersects the interior of reach set $\Omega(t;\mathcal{I}_0,\bm{d})$ for $\bm{d}\in \widehat{\mathcal{D}}$. For such sake, we first extract $\bm{x}_0 \in \mathcal{I}_0^{\circ}$, then compute an over-approximation $O(t;\bm{x}_0)$ of the reach set $\Omega_1(t;\bm{x}_0)$ of the state $\bm{x}_0$ at time $t$. Finally, we verify whether $O(t;\bm{x}_0)$ is included in $\mathcal{U}_{k,t}^{\circ}$. If $O(t;\bm{x}_0)$ is a subset of the interior of the set $\mathcal{U}_{k,t}^{\circ}$, we obtain that \[\mathcal{U}_{k,t}^{\circ}\cap \Omega(t;\mathcal{I}_0,\bm{d})^{\circ}\neq \emptyset\] for $\bm{d}\in \widehat{\mathcal{D}}$ and consequently \[\mathcal{U}_{k,t}\subseteq \Omega(t;\mathcal{I}_0,\bm{d})\] for $\bm{d}\in \widehat{\mathcal{D}}$ according to Lemma \ref{under}. Thus,\[\mathcal{U}_{k,t}\subseteq  \cap_{\bm{d}\in \widehat{\mathcal{D}}} \Omega(t;\mathcal{I}_0,\bm{d})=\Omega_2(t;\mathcal{I}_0)\] and consequently $\mathcal{U}_{k,t}$ is an under-approximation of the reach set $\Omega_2(t;\mathcal{I}_0)$.

\section{Examples and Discussions}
\label{Ex}
In this section we evaluate our set-boundary reachability analysis method on three examples of two two-dimensional systems and one seven-dimensional system. All computations were carried out on an i7-7500U 2.70GHz CPU with 32GB RAM running Windows 10. 

\begin{example}
\label{11}
Consider a simple two-dimensional system of the form \eqref{dde5}, where
\[\bm{g}(\bm{x},\bm{d})=(g_1(\bm{x},\bm{d}),g_2(\bm{x},\bm{d}))^T=(-0.1y+dx,-0.01x+0.02y)^T,\]
\[
\bm{f}(\bm{x},\bm{x}_{\tau},\bm{d})=(f_1(\bm{x},\bm{x}_{\tau},\bm{d}),f_2(\bm{x},\bm{x}_{\tau},\bm{d}))^T=(-0.1y+dx,-0.01x_{\tau}+0.02y)^T,\]
$\mathcal{D}=[-0.01,0.01]$, $\mathcal{X}=[-100,100]\times [-100,100]$ and $\mathcal{I}_0=[0.1,0.3]\times [0.1,0.3]$ with $\partial \mathcal{I}_0=\cup_{i=1}^4 \mathcal{I}_{0,i}$, where $\mathcal{I}_{0,1}=[0.1,0.1]\times [0.1,0.3]$, 
$\mathcal{I}_{0,2}=[0.3,0.3]\times [0.1,0.3]$, $\mathcal{I}_{0,3}=[0.1,0.3]\times [0.1,0.1]$ and $\mathcal{I}_{0,4}=[0.1,0.3]\times [0.3,0.3]$. 
\end{example}
In this example $M'=0.11$, $M=0.11$, $N=0.01$, $R=2$ and $\epsilon=4$. Actually, the presence of $\mathcal{X}$ is not necessary since $M'=0.11$, $M=0.11$, $N=0.01$, $R=2$ and $\epsilon=2$ hold true over $\bm{x}\in \mathbb{R}^n$ and $\bm{x}_{\tau}\in \mathbb{R}^n$. Through simple calculations, $\tau\leq 2.50$ satisfies Theorem \ref{tau}. Note that if we use the condition in \cite{XueMFCLZ17} to estimate $\tau$, via simple calculations with $\|\frac{\partial \bm{g}(\bm{x},\bm{d})}{\partial \bm{x}}\|_{\max}=0.1$, $\|\frac{\partial \bm{f}(\bm{x},\bm{x}_{\tau},\bm{d})}{\partial \bm{x}}\|_{\max}=0.1$, $\| \frac{\partial\bm{f}(\bm{x},\bm{x}_{\tau},\bm{d})}{\partial \bm{x}_{\tau}}\|_{\max}=0.01$, $R=2$ and $\epsilon=2$, we have $\tau\leq 0.66$.

Let $\tau=1$ and $K=10$, we perform reachability analysis for this example. The computed over-approximation $\cup_{t\in [0,10.0]} O(t;\partial \mathcal{I}_0)$ of the reach set of the initial set's boundary is illustrated in Fig. \ref{fig:10}, which also shows the computed over- and under-approximations at time $t=10$ as well as the corresponding over-approximation of the reach set of the initial set's boundary. The computation time is 15.53 seconds.

\begin{figure}[!htb]
\centering
  \includegraphics[width=.45\linewidth]{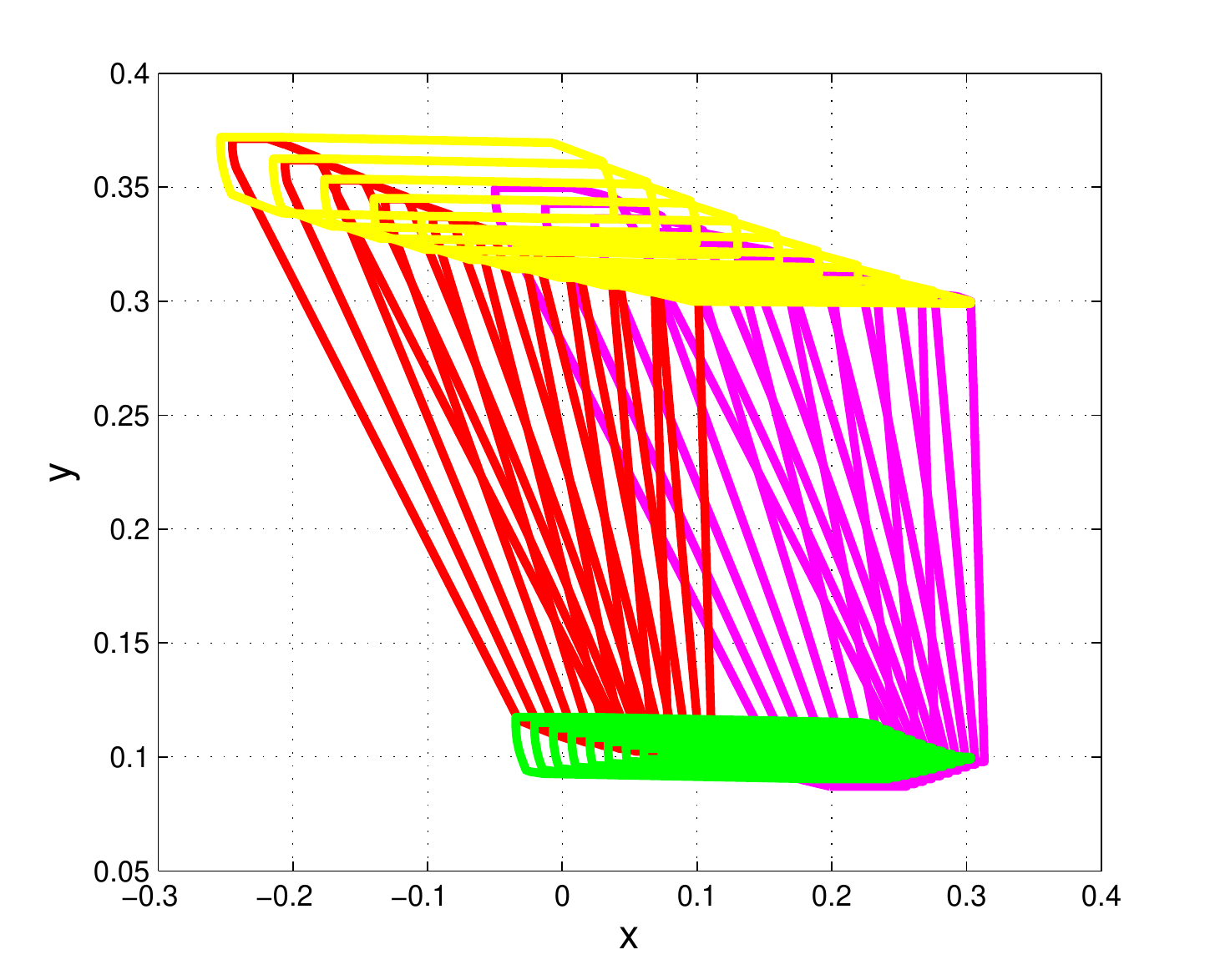}
  \hfill
 \includegraphics[width=.45\linewidth]{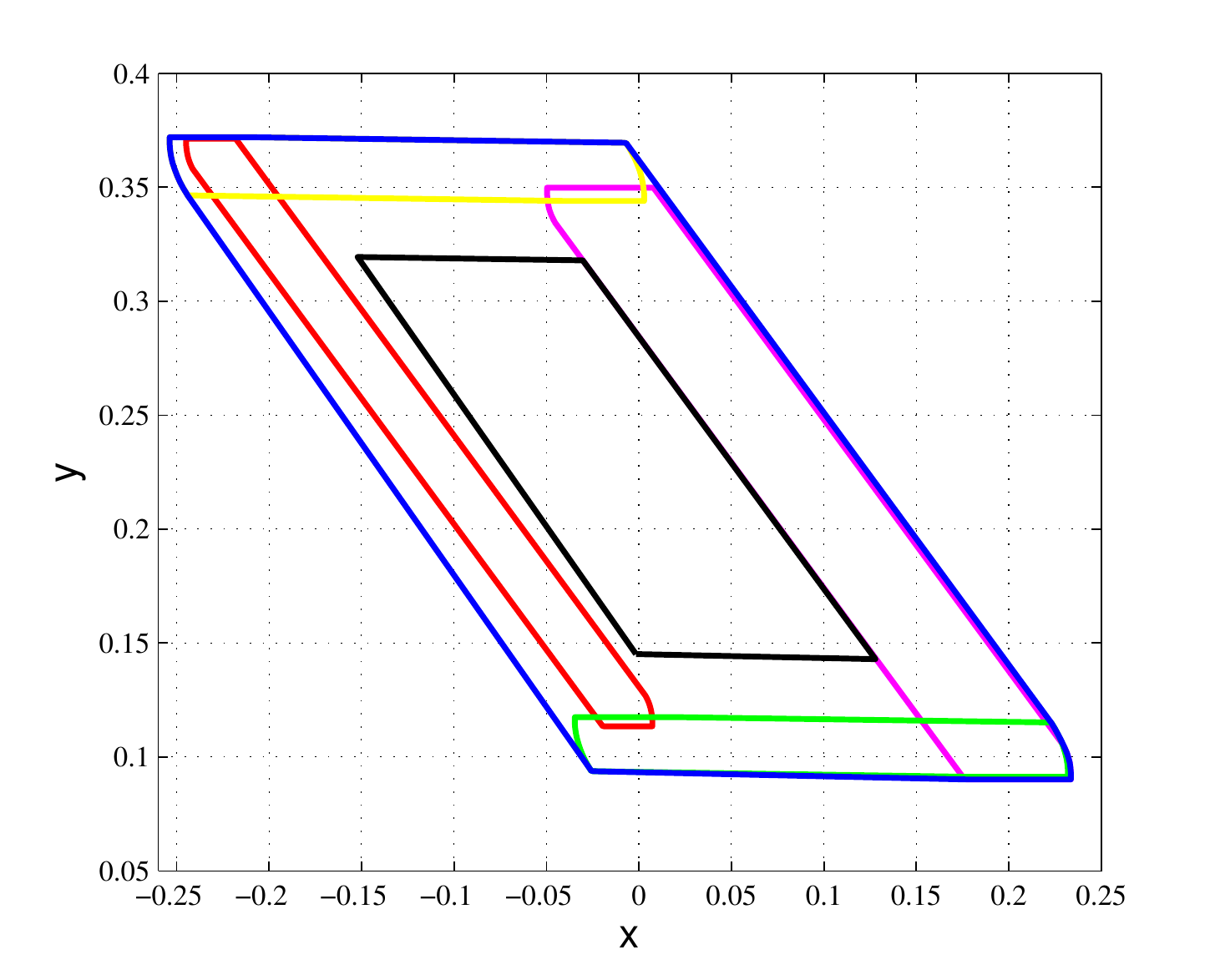}
\caption{Reach sets for Example \ref{11}. Left: Red, purple, green and yellow curves respectively denote the boundary of $\cup_{t\in [0,10.0]}O(t;\mathcal{I}_{0,1})$, $\cup_{t\in [0,10.0]} O(t;\mathcal{I}_{0,2})$, $\cup_{t\in [0,10.0]} O(t;\mathcal{I}_{0,3})$ and $\cup_{t\in [0,10.0]} O(t;\mathcal{I}_{0,4})$. Right: Red, purple, green and yellow curves respectively denote $\partial O(10;\mathcal{I}_{0,1})$, $\partial O(10;\mathcal{I}_{0,2})$, $\partial O(10;\mathcal{I}_{0,3})$ and $\partial O(10;\mathcal{I}_{0,4})$. Blue and black curves denote $\partial O(10;\mathcal{I}_0)$ and $\partial U(10;\mathcal{I}_0)$ respectively.}
\label{fig:10}
\end{figure}
Based on this example we motivate our study of over- and under-approximations of interest in this paper. Suppose the dynamics of a physical system are captured by this perturbed DDE. Unfortunately, this perturbation input is not known exactly. We wish to verify whether the physical system does not enter a set of unsafe states $\mathcal{X}_{u}$ at time $t=10$. If $\mathcal{X}_{u}=[0.15,0.2]\times [0.3,0.35]$, we conclude that the physical system starting from the initial set $\mathcal{I}_0$ is robustly safe since the computed over-approximation $O(10;\mathcal{I}_0)$ does not intersect $\mathcal{X}_{u}$, which can be observed from Fig. \ref{fig:10}. In contrast, if $\mathcal{X}_{u}=[0,0.05]\times [0.25,0.3]$, we observe from Fig. \ref{fig:10} that the computed under-approximation intersects $\mathcal{X}_{u}$. This indicates that the physical system starting from the initial set $\mathcal{I}_0$ will touch $\mathcal{X}_{u}$ at time $t=10$ regardless of the actual perturbation and consequently it is robustly unsafe.

\begin{example}
\label{lotka-volterra}
Consider a system, which is adapted from \cite{XueMFCLZ17}, of the form \eqref{dde5}, where 
\[\bm{g}(\bm{x},\bm{d})=(g_1(\bm{x},\bm{d}),g_2(\bm{x},\bm{d}))^T
=(y,-0.2x+2.0y-0.2x^2y+d)^T,\]
\[\bm{f}(\bm{x},\bm{x}_{\tau},\bm{d})=(f_1(\bm{x},\bm{x}_{\tau},\bm{d}),f_2(\bm{x},\bm{x}_{\tau},\bm{d}))^T=(y,-0.2{x}_{\tau}+2.0y-0.2x^2y+d)^T,\]
$\mathcal{D}=[-0.01,0.01]$, $\mathcal{X}=[0.5,5]\times [-1.5,3.5]$, $\mathcal{I}_0=[0.9,1.1]\times [0.9,1.1]$ with $\partial \mathcal{I}_0=\cup_{i=1}^4 \mathcal{I}_{0,i}$, where $\mathcal{I}_{0,1}=[0.9,0.9]\times [0.9,1.1]$, 
$\mathcal{I}_{0,2}=[1.1,1.1]\times [0.9,1.1]$, $\mathcal{I}_{0,3}=[0.9,1.1]\times [0.9,0.9]$ and $\mathcal{I}_{0,4}=[0.9,1.1]\times [1.1,1.1]$.
\end{example}

In this example $M'=12.0, M=12.0, N=0.2$, $R=2$ and $\epsilon=2$.  Through simple calculations, $\tau=0.02$ satisfies Theorem \ref{tau}. $K$ is assigned to $250$ and thus the entire time interval is $[0,5.0]$. If using the condition in \cite{XueMFCLZ17} to estimate $\tau$ with  $\|\frac{\partial \bm{g}(\bm{x},\bm{d})}{\partial \bm{x}}\|_{\max}=7.2$,
$\|\frac{\partial \bm{f}(\bm{x},\bm{x}_{\tau},\bm{d})}{\partial \bm{x}}\|_{\max}=7$, $\| \frac{\partial\bm{f}(\bm{x},\bm{x}_{\tau},\bm{d})}{\partial \bm{x}_{\tau}}\|_{\max}=0.2$, $R=2$ and $\epsilon=2$, we have $\tau \leq 0.008$.

\begin{figure}[h!]
\centering
\scalebox{0.5}{\centerline{\psfig{figure=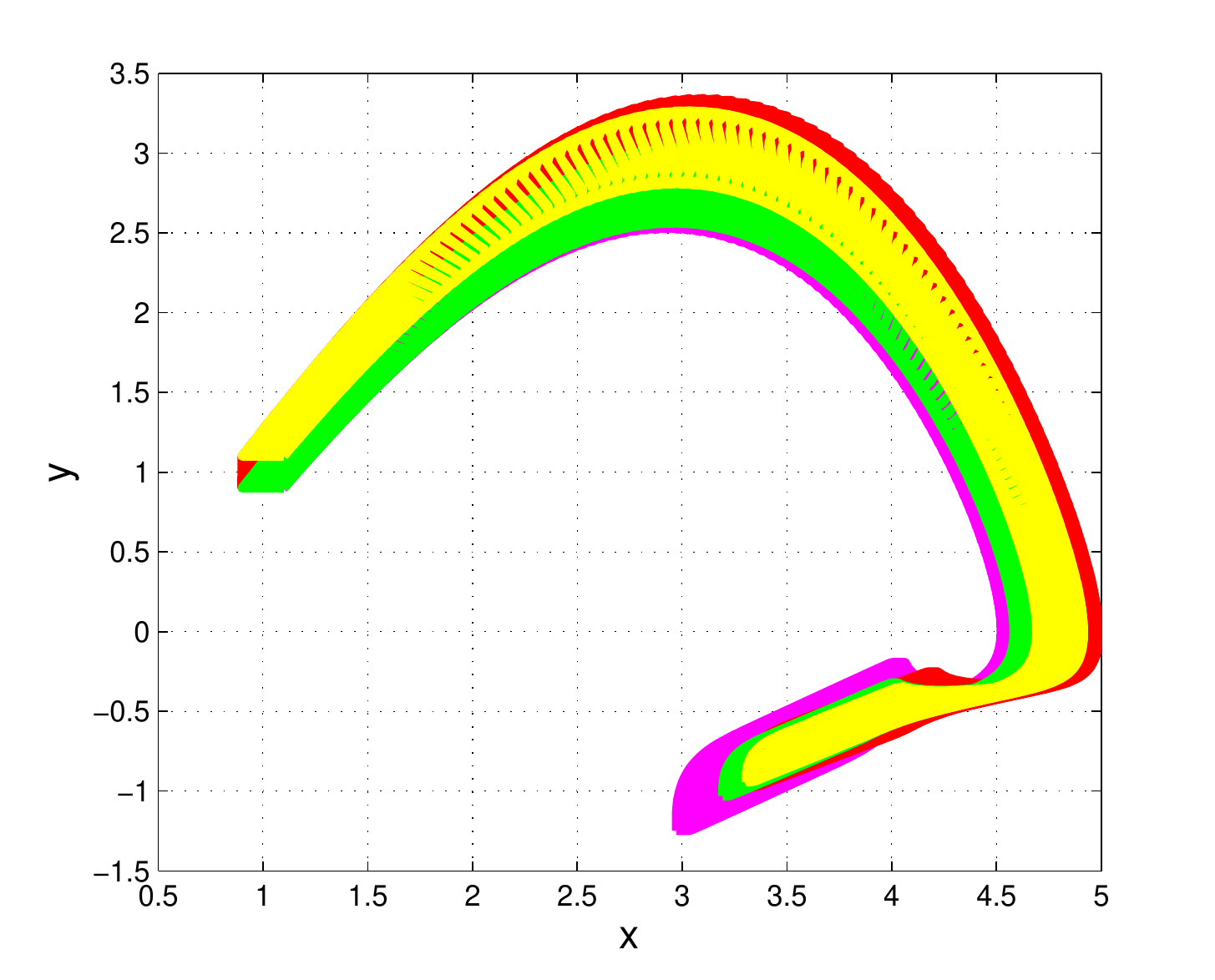,width=15cm,height=15cm}}}
\caption{Reach sets of the set $\partial \mathcal{I}_0$ for Example \ref{lotka-volterra}. Red, purple, green and yellow regions denote $\cup_{t\in [0,5.0]}O(t;\mathcal{I}_{0,1})$, $\cup_{t\in [0,5.0]} O(t;\mathcal{I}_{0,2})$, $\cup_{t\in [0,5.0]} O(t;\mathcal{I}_{0,3})$ and $\cup_{t\in [0,5.0]} O(t;\mathcal{I}_{0,4})$, respectively.}
\label{fig:1}
\end{figure}

\begin{figure}[h!]
\centering
 \includegraphics[width=3.5in,height=3.0in]{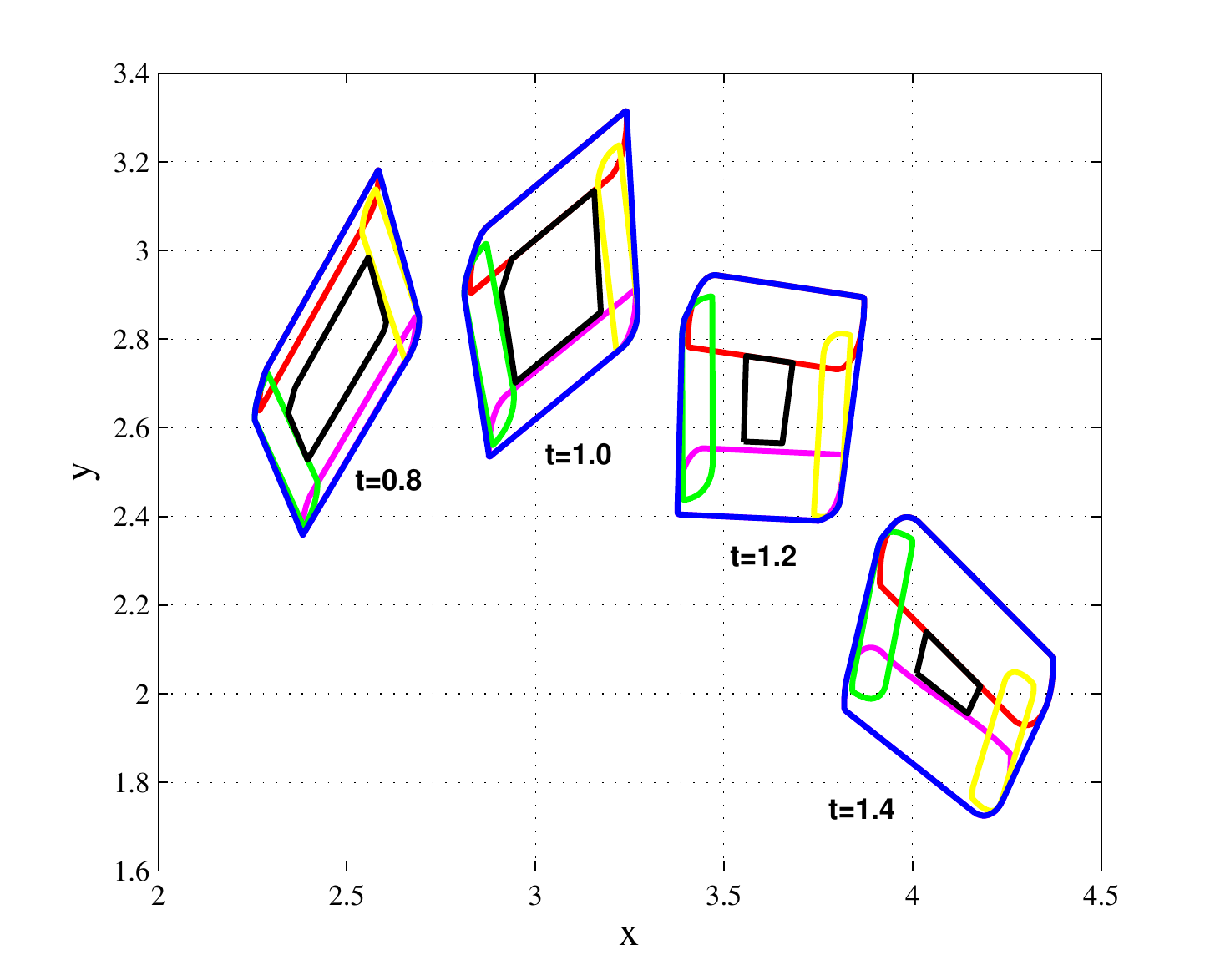}
 \hfill
 \includegraphics[width=3.5in,height=3.0in]{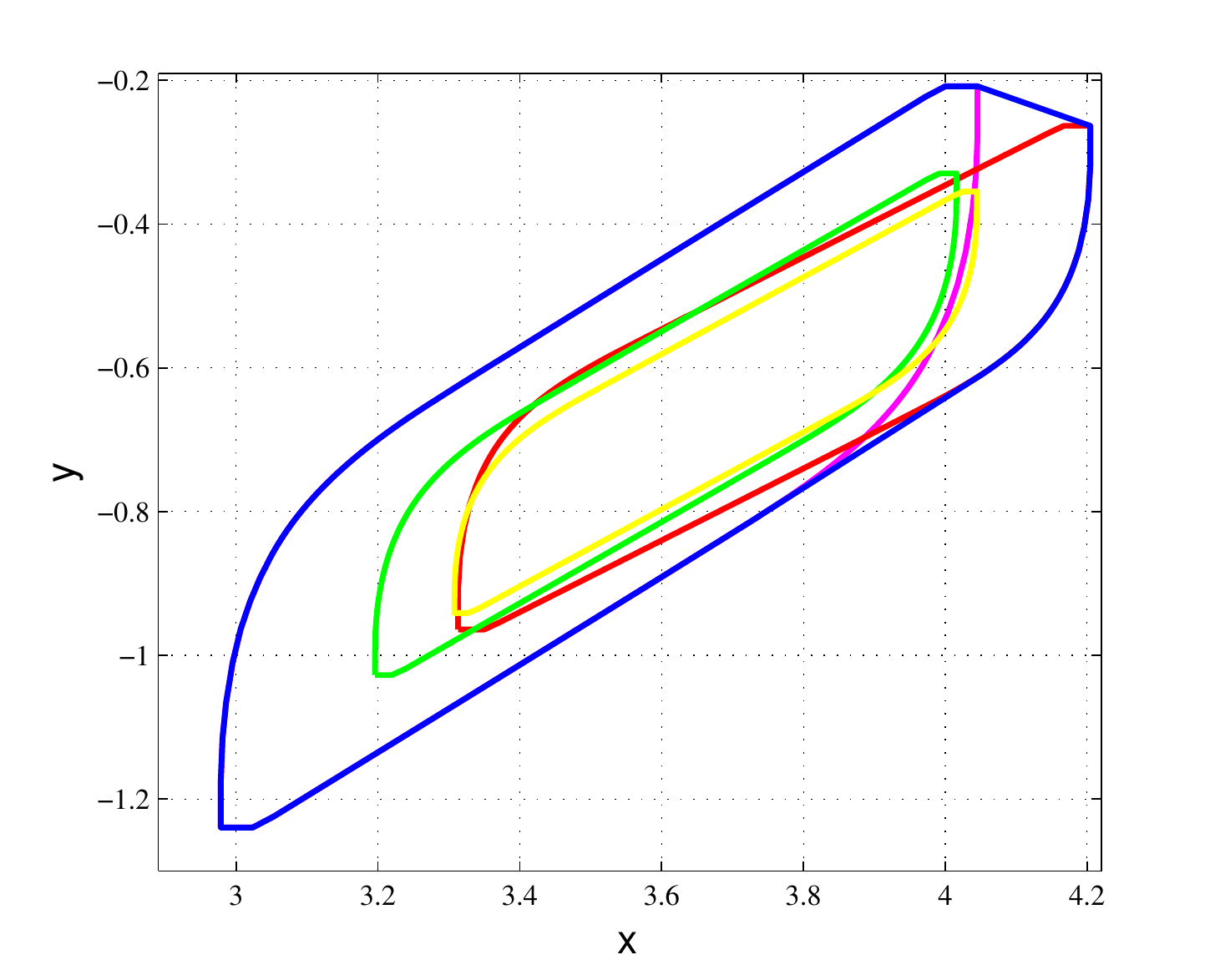}
\caption{Left (Reach sets for Example \ref{lotka-volterra} when $t=0.8, 1.0, 1.2,1.4$): Red, purple, green and yellow curves denote $\partial O(t;\mathcal{I}_{0,1})$, $\partial O(t;\mathcal{I}_{0,2})$, $\partial O(t;\mathcal{I}_{0,3})$ and $\partial O(t;\mathcal{I}_{0,4})$ respectively. Blue curve denotes $\partial O(t;\mathcal{I}_0)$. Black curve denotes $\partial U(t;\mathcal{I}_0)$. Right (Reach sets of the initial set's boundary for Example \ref{lotka-volterra} at time $t=5.0$): Red, purple, green and blue curves denote $\partial O(t;\mathcal{I}_{0,1})$, $\partial O(t;\mathcal{I}_{0,2})$, $\partial O(t;\mathcal{I}_{0,3})$ and $\partial O(t;\mathcal{I}_{0,4})$, respectively. Blue curve denotes $\partial O(t;\mathcal{I}_0)$.}
\label{fig:4}
\end{figure}

We observe that the computed over-approximation $\cup_{t\in [0,5.0]} O(t;\partial \mathcal{I}_0)$ is included in $\mathcal{X}$ as illustrated in Fig. \ref{fig:1}. Unlike that in step 1) in Subsection \ref{rsc}, $\cup_{t\in [0,5.0]} O(t;\partial \mathcal{I}_0)$ illustrated in Fig. \ref{fig:1} is equal to $\cup_{i=1}^4\cup_{t\in [0,5.0]} O(t;\mathcal{I}_{0,i})$ and is computed without the assumption that this system starting from $\mathcal{I}_0$ evolves in the viable domain $\mathcal{X}$ within the time interval $[0,5.0]$. According to Lemma 1 in \cite{IEEE16}, $\cup_{t\in [0,5.0]}\Omega_1(t;\mathcal{I}_0)\subset \mathcal{X}$ in Assumption \ref{assump} is guaranteed. 

An illustration of the computed over- and under-approximations at time $t=0.8,1.0,1.2,1.4$ is demonstrated in Fig. \ref{fig:4}, which also shows the corresponding over-approximation of the reach set of the initial set's boundary. From Fig. \ref{fig:4}, we observe that the computed under-approximation tends to be empty with the time horizon expanding. This contrasts with the wrapping effect in over-approximating the reach set. The computed under-approximation at time $t=1.4$ becomes small as shown in Fig. \ref{fig:4}. We did not yield an under-approximation at time $t=5.0$, thereby only an over-approximation at time $t=5.0$ is showcased in Fig. \ref{fig:4}. The computation time for this reachability analysis is $320.56$ seconds. Partitioning the initial set's boundary into small subsets and performing reachability analysis on each subset independently would help reduce the wrapping effect in over-approximating the reach set of the initial set's boundary, thereby having the possibility to construct a non-empty under-approximation of the set $\Omega_2(5.0;\mathcal{I}_0)$ as well as a more accurate over-approximation of the set $\Omega_1(5.0;\mathcal{I}_0)$.

\begin{example}
\label{seven}
Consider a seven-dimensional system adapted from \cite{XueMFCLZ17}, where $\bm{g}(\bm{x},\bm{d})=\bm{0}$, $\bm{f}(\bm{x},\bm{x}_{\tau},\bm{d})=(1.4x_3-0.9x_{1,\tau},2.5x_5-1.5x_2,0.6x_7-0.8x_3x_2, d-1.3x_4x_3,0.7x_1-1.0x_4
x_5,0.3x_1-3.1x_6,1.8x_6-1.5x_7x_2)^T$, $\mathcal{I}_0=[1.1,1.3]\times [0.95,1.15]\times [1.4,1.6]\times [2.3, 2.5]\times [0.9,1.1]\times [0.0, 0.2]\times [0.35,0.55]$, $\mathcal{X}=[0.5,1.5]\times [0.5,1.5]\times [1.0,2.0]\times [2.0,3.0]\times [0.5,1.5]\times [0.0,0.5]\times [0.0,1.0]$ and $\mathcal{D}=[1.9,2.1]$.
\end{example}

Via simple calculations, we obtain $M'=0,M=6.5, N=0.9$, $R=2$ and $\epsilon=2$ and thus $\tau\leq 0.03$ satisfies the requirement in Theorem \ref{tau}. Also, $\tau$ and $K$ are assigned to $0.02$ and $5$ respectively and thus the entire time interval is $[0,0.1]$. If using the condition in \cite{XueMFCLZ17} with $\|\frac{\partial \bm{g}(\bm{x},\bm{d})}{\partial \bm{x}}\|_{\max}=0$,
$\|\frac{\partial \bm{f}(\bm{x},\bm{x}_{\tau},\bm{d})}{\partial \bm{x}}\|_{\max}=3.9$, $\| \frac{\partial\bm{f}(\bm{x},\bm{x}_{\tau},\bm{d})}{\partial \bm{x}_{\tau}}\|_{\max}=0.9$, $R=2$ and $\epsilon=2$, we have $\tau\leq 8.95\times 10^{-4}$.

We use the same technique as in Example \ref{lotka-volterra} to verify the assumption that $\cup_{t\in [0,0.1]}\Omega_1(t;\mathcal{I}_0)\subset \mathcal{X}$. The computed over-approximation of $\Omega_1(0.1;\mathcal{I}_0)$ is
\begin{equation*}
\begin{split}
 O(0.1;\mathcal{I}_0)=&[1.1556,1.3955]\times [1.0016,1.2165]\times [1.3106,1.5286]\\
&\times [2.0898,2.3308]\times
[0.7957,0.9946]\times [0.02403,0.1847]\times[0.3017,0.5164]
\end{split}
\end{equation*}
 and the computed under-approximation of $\Omega_2(0.1;\mathcal{I}_0)$  is 
 \begin{equation*}
 \begin{split}
U(0.1;\mathcal{I}_0)=&[1.2242,1.3268]\times [1.0703,1.1478]\times [1.3792,1.4600]\\
&\times [2.1585,2.2621]\times
[0.8644,0.9259]\times [0.0927,0.1161]\times [0.3704,0.4478].
\end{split}
\end{equation*}

For this system having variables $\bm{x}$ of seven dimension, we compute an interval over-approximation $O(0.1;\mathcal{I}_0)$ and under-approximation $U(0.1;\mathcal{I}_0)$ respectively. As opposed to polytopic representations, the interval representation may be simpler and more conservative, but the overall computation time consumed is admissible, which is just $130.21$
seconds. 

Based on Examples \ref{11}$\sim$\ref{seven} we found that condition \eqref{uppp} indeed helps to achieve a reasonable improvement on the bound of the term $\tau$ over the condition in \cite{XueMFCLZ17}, by comparing the bounds obtained by condition \eqref{uppp} and the condition in \cite{XueMFCLZ17}. We should point out here that the constraints $\tau\leq 0.66$, $\tau\leq 0.008$ and $\tau\leq 8.95\times 10^{-4}$ for Examples \ref{11}$\sim$\ref{seven} respectively are obtained based on the condition in the revised version of \cite{XueMFCLZ17}, which can be downloaded from \url{http://lcs.ios.ac.cn/~xuebai/Publications.html}. The underlying reason for this improvement is that the derivation of condition \eqref{uppp} only involves operations of the infinity norm of matrices, as reflected in the proofs of Lemma \ref{ode} and \ref{findt}. However, the derivation in \cite{XueMFCLZ17} involves manipulating 2-norm, infinity norm and max norm of matrices and their non-equivalent interconvertibility, thereby introducing conservativeness. Please refer to Lemma 2 and 3 in \cite{XueMFCLZ17} for details. Furthermore, some potential reasons, why the upper bound obtained from condition \eqref{uppp} is still small for certain cases, are presented here: 1). The exact upper bound is not known and thus the conservativeness of condition \eqref{uppp} cannot be evaluated. 2). The way to estimate the time-lag term such that the sensitivity matrices are strictly diagonally dominant is not the best. Actually, it is enough to estimate it such that the sensitivity matrices are invertible, as reflected in Corollary \ref{coro1}. However, how to derive such a bound, which results in invertible sensitivity matrices which are not necessarily strictly diagonally dominant, is still unclear.

\section{Conclusion}
\label{con}
In this note we extended the set-boundary reachability analysis method for perturbation-free ODEs to a class of DDEs subject to time-varying Lipschitz continuous perturbations. Three illustrative examples were employed to demonstrate the performance of our method.

\bibliographystyle{abbrv}
\bibliography{reference}  
\section{Appendix}
\label{appendix}
In Appendix we show the derivation of Theorem \ref{tau}. Its derivation is based on the requirement that the sensitivity matrices are strictly diagonally dominant.

\subsection{Sensitivity Matrices}
\label{SAT}
In this subsection we introduce sensitivity matrices and associated sensitivity equations.

When $t\in[0,\tau]$, system \eqref{dde5} is governed by ODE $\dot{\bm{x}}(t)=\bm{g}\big(\bm{x}(t),\bm{d}(t)\big)$ with $\bm{d}(t)\in \mathcal{D},$ its flow mapping $\bm{\phi}(t;\bm{x}_0,\bm{d})$ as a function of $\bm{x}_0$ is differentiable with respect to the initial state $\bm{x}_0$. Given a perturbation input $\bm{d}\in \mathcal{D}$, the \textit{sensitivity matrix} of solutions at time $t \in [0,\tau]$ with respect to initial conditions is $
s_{\bm{x}_0}^{\bm{d}}(t)=\frac{\partial \bm{\phi}(t;\bm{x}_0,\bm{d})}{\partial \bm{x}_0},$
which is the solution to ODE: 
\begin{equation}
\label{ode_se}
\dot{s}_{\bm{x}_0}^{\bm{d}}=D_{\bm{g}}s_{\bm{x}_0}^{\bm{d}}, s_{\bm{x}_0}^{\bm{d}}(0)=\bm{I},
\end{equation}
where $s_{\bm{x}_0}^{\bm{d}}(t)\in \mathbb{R}^{n\times n}$, $D_{\bm{g}}$ is the Jacobian matrix of vector field $\bm{g}$ along the trajectory $\bm{\phi}(t;\bm{x}_0,\bm{d})$, i.e. $D_{\bm{g}}=\frac{\partial \bm{g}}{\partial \bm{x}}\mid_{\bm{x}=\bm{\phi}(t;\bm{x}_0,\bm{d})}$, and $\bm{I}\in \mathbb{R}^{n\times n}$ is the identity matrix.

Assume that the solution mapping $\bm{\phi}(t;\bm{x}_0,\bm{d})$ of system \eqref{dde5} for 
$t\in [(k-1)\tau, k\tau]$ and $\bm{x}_0\in \mathcal{I}_0$ could be equivalently reformulated as a continuously differentiable function $\bm{\psi}_{k-1}(t;\bm{x}((k-1)\tau),(k-1)\tau)$ of the state variable $\bm{x}((k-1)\tau)\in \Omega((k-1)\tau;\mathcal{I}_0,\bm{d})$ and the time variable $t\in [(k-1)\tau,k\tau]$, where $k\in \{1,\ldots,K-1\}$ and $\bm{x}((k-1)\tau)=\bm{\phi}((k-1)\tau;\bm{x}_0,\bm{d})$. Also, assume the determinant of the Jacobian matrix of the mapping $\bm{\psi}_{k-1}(t;\bm{x}((k-1)\tau),(k-1)\tau)$ with respect to state $\bm{x}((k-1)\tau)\in \Omega((k-1)\tau;\mathcal{I}_0,\bm{d})$ is not zero for $t\in [(k-1)\tau,k\tau]$. Then, we deduce what follows.
\begin{lemma}
\label{sensi}
Given the above assumptions and a perturbation input $\bm{d}\in \mathcal{D}$, $s_{\bm{x}(k\tau)}^{\bm{d}}(t)=\frac{\partial \bm{x}(t)}{\partial \bm{x}({k\tau})}$ for system \eqref{dde5} exists and satisfies the following equation:
\begin{equation}
\label{sae}
\begin{split}
\dot{s}_{\bm{x}(k\tau)}^{\bm{d}}(t)=&\frac{\partial \bm{f}(\bm{x}(t),\bm{x}_{\tau}(t),\bm{d}(t))}{\partial \bm{x}(t)}s_{\bm{x}(k\tau)}^{\bm{d}}(t)+\frac{\partial \bm{f}(\bm{x}(t),\bm{x}_{\tau}(t),\bm{d}(t))}{\partial \bm{x}_{\tau}(t)}\frac{\partial \bm{x}_{\tau}(t)}{\partial \bm{x}(k\tau)},
\end{split}
\end{equation}
where $\dot{s}_{\bm{x}(k\tau)}^{\bm{d}}=\frac{d~s_{\bm{x}(k\tau)}^{\bm{d}}(t)}{d~t}$, $t\in [k\tau,(k+1)\tau]$, $s_{\bm{x}(k\tau)}^{\bm{d}}(k\tau)=\bm{I}\in \mathbb{R}^{n\times n}$ and $\bm{x}(k\tau)\in \Omega(k\tau;\mathcal{I}_0,\bm{d})$.
\end{lemma}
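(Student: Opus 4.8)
The plan is to freeze the window $t\in[k\tau,(k+1)\tau]$, turn the DDE there into a non-autonomous ODE whose ``initial state'' is $\bm{x}(k\tau)$ and whose delayed argument $\bm{x}_\tau(t)=\bm{x}(t-\tau)$ is re-expressed as a $\mathcal{C}^1$ function of $\bm{x}(k\tau)$, and then quote the classical theorem on $\mathcal{C}^1$-dependence of ODE solutions on initial data. On this window $t-\tau\in[(k-1)\tau,k\tau]$, and by the standing hypotheses $\bm{x}(t-\tau)=\bm{\psi}_{k-1}(t-\tau;\bm{x}((k-1)\tau),(k-1)\tau)$ is $\mathcal{C}^1$ in $\bm{x}((k-1)\tau)$, while $\bm{x}(k\tau)=\bm{\psi}_{k-1}(k\tau;\bm{x}((k-1)\tau),(k-1)\tau)$ has, by assumption, a Jacobian with nonzero determinant in $\bm{x}((k-1)\tau)$. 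First I would invoke the inverse function theorem to get a local $\mathcal{C}^1$ inverse $\bm{x}(k\tau)\mapsto\bm{x}((k-1)\tau)$; composing it with $\bm{\psi}_{k-1}$ expresses $\bm{x}_\tau(t)$ as a $\mathcal{C}^1$ function of $\bm{x}(k\tau)$ alone. Hence $\dot{\bm{x}}(t)=\bm{f}(\bm{x}(t),\bm{x}_\tau(t),\bm{d}(t))$ on $[k\tau,(k+1)\tau]$ has a right-hand side that is $\mathcal{C}^1$ in the state and in the parameter $\bm{x}(k\tau)$; classical ODE theory then gives that $\bm{\phi}(t;\cdot,\bm{d})$ is $\mathcal{C}^1$ in the initial state $\bm{x}(k\tau)$, so $s^{\bm{d}}_{\bm{x}(k\tau)}(t)=\partial\bm{x}(t)/\partial\bm{x}(k\tau)$ exists and is continuous in $t$.

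For the equation itself I would start from the integral form $\bm{x}(t)=\bm{x}(k\tau)+\int_{k\tau}^{t}\bm{f}(\bm{x}(s),\bm{x}_\tau(s),\bm{d}(s))\,ds$, differentiate in $\bm{x}(k\tau)$, exchange the derivative and the integral (legitimate since the integrand is $\mathcal{C}^1$ with derivatives locally uniformly bounded over the compact window), and apply the chain rule to obtain
\[
s^{\bm{d}}_{\bm{x}(k\tau)}(t)=\bm{I}+\int_{k\tau}^{t}\left(\frac{\partial\bm{f}}{\partial\bm{x}(s)}\,s^{\bm{d}}_{\bm{x}(k\tau)}(s)+\frac{\partial\bm{f}}{\partial\bm{x}_\tau(s)}\,\frac{\partial\bm{x}_\tau(s)}{\partial\bm{x}(k\tau)}\right)ds,
\]
with all Jacobians of $\bm{f}$ evaluated along $(\bm{x}(s),\bm{x}_\tau(s),\bm{d}(s))$ and $\partial\bm{x}_\tau(s)/\partial\bm{x}(k\tau)$ computed by the chain rule through $\bm{\psi}_{k-1}$ and the inverse map above. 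Evaluating at $t=k\tau$ kills the integral, which gives $s^{\bm{d}}_{\bm{x}(k\tau)}(k\tau)=\bm{I}$; differentiating the identity in $t$ (the integrand being continuous in $s$) yields exactly \eqref{sae}.

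The hard part is not the variational computation but the re-parametrization: one must know that, along the given trajectory, the map carrying $\bm{x}(k\tau)$ to the whole delayed history on $[k\tau,(k+1)\tau]$ is well-defined and $\mathcal{C}^1$, and this is precisely what the two hypotheses stated just before the lemma are for --- the $\mathcal{C}^1$ reformulation via $\bm{\psi}_{k-1}$ together with the nonvanishing of the determinant of its Jacobian, which is what makes the inverse function theorem applicable. I would also point out that this chain of hypotheses propagates with $k$: for $k=1$ the map $\bm{\psi}_0$ is just the ODE flow, whose Jacobian $s^{\bm{d}}_{\bm{x}_0}(t)$ solving \eqref{ode_se} is automatically invertible as a fundamental matrix, and for larger $k$ the nonsingularity of the Jacobian of $\bm{\psi}_k$ is equivalent to the invertibility of $s^{\bm{d}}_{\bm{x}(k\tau)}(t)$ on $[k\tau,(k+1)\tau]$ --- exactly the property that the bound \eqref{uppp} on $\tau$ (via strict diagonal dominance) is designed to guarantee, so the assumptions are consistent and not vacuous.
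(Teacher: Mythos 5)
Your proposal is correct and follows essentially the argument the paper intends: the paper omits the proof, citing it as analogous to Lemma 1 of \cite{XueMFCLZ17}, and that argument is precisely your reconstruction --- use the standing hypotheses on $\bm{\psi}_{k-1}$ and the nonvanishing Jacobian determinant (via the inverse function theorem) to express $\bm{x}_\tau(t)$ as a $\mathcal{C}^1$ function of $\bm{x}(k\tau)$, reduce the DDE on $[k\tau,(k+1)\tau]$ to an ODE with $\mathcal{C}^1$ dependence on the initial state, and differentiate the integral form to obtain \eqref{sae} with $s^{\bm{d}}_{\bm{x}(k\tau)}(k\tau)=\bm{I}$.
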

The proof of Lemma \ref{sensi} is analogous to Lemma 1 in \cite{XueMFCLZ17}.

\begin{corollary}
\label{coro1}
For $\bm{d}\in \mathcal{D}$ and $t\in [k\tau,(k+1)\tau]$, if the determinant of $s_{\bm{x}(k\tau)}^{\bm{d}}(t)$ with respect to $\bm{x}({k\tau})\in \Omega(k\tau;\mathcal{I}_0,\bm{d})$ at time $k \tau $ is not zero, where $k\in\{0,1,\ldots,K-1\}$, then $\bm{x}(t)=\bm{\phi}(t;\bm{x}_0,\bm{d})$ is uniquely determined by $\bm{x}(k\tau)$. 
\end{corollary}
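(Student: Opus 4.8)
The plan is to prove the statement by induction on $k$, turning the segment-wise structure of DDE \eqref{dde5} into a finite chain of applications of the inverse function theorem. The inductive claim at level $k$ reads: the trajectory restricted to $[k\tau,(k+1)\tau]$ depends on the single endpoint state $\bm{x}(k\tau)=\bm{\phi}(k\tau;\bm{x}_0,\bm{d})$ through a $\mathcal{C}^1$ map $\bm{\psi}_k(\cdot;\bm{x}(k\tau),k\tau)$, and, for $k\geq 1$, the evaluation map $\bm{x}((k-1)\tau)\mapsto\bm{x}(k\tau)$ is a local $\mathcal{C}^1$ diffeomorphism. The hypothesis $\det s_{\bm{x}(k\tau)}^{\bm{d}}(t)\neq 0$ on $[k\tau,(k+1)\tau]$ is exactly what keeps each link of this chain intact.

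For the base case $k=0$, on $[0,\tau]$ the dynamics reduce to the ODE $\dot{\bm{x}}=\bm{g}(\bm{x},\bm{d})$ with $\bm{x}(0)=\bm{x}_0$; since $\bm{g}\in\mathcal{C}^1$, uniqueness of solutions gives that $\bm{x}(t)$ is determined by $\bm{x}(0)$, smooth dependence on the initial condition makes this $\mathcal{C}^1$, and its Jacobian $s_{\bm{x}(0)}^{\bm{d}}(t)$ solves \eqref{ode_se}, hence has nonzero determinant for all $t\in[0,\tau]$ by Liouville's formula. For the inductive step, assume the claim for $0,\ldots,k-1$. On $[k\tau,(k+1)\tau]$ the system reads $\dot{\bm{x}}(t)=\bm{f}(\bm{x}(t),\bm{x}(t-\tau),\bm{d}(t))$, and its only dependence on the past is through $\bm{x}(t-\tau)$ with $t-\tau\in[(k-1)\tau,k\tau]$; by the inductive claim this equals $\bm{\psi}_{k-1}(t-\tau;\bm{x}((k-1)\tau),(k-1)\tau)$, a $\mathcal{C}^1$ function of $\bm{x}((k-1)\tau)$, while the hypothesis at index $k-1$ evaluated at $t=k\tau$ says $s_{\bm{x}((k-1)\tau)}^{\bm{d}}(k\tau)$ is nonsingular, so the inverse function theorem expresses $\bm{x}((k-1)\tau)$ locally as a $\mathcal{C}^1$ function of $\bm{x}(k\tau)$. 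Composing, the delayed term becomes a $\mathcal{C}^1$ function of $(t,\bm{x}(k\tau))$, so on $[k\tau,(k+1)\tau]$ the DDE reduces to a non-autonomous ODE $\dot{\bm{x}}(t)=\bm{F}(t,\bm{x}(t);\bm{x}(k\tau))$ with right-hand side that is $\mathcal{C}^1$ in the state and in the parameter $\bm{x}(k\tau)$ and with initial value $\bm{x}(k\tau)$; uniqueness for such ODEs yields that $\bm{x}(t)=\bm{\phi}(t;\bm{x}_0,\bm{d})$ on this segment is uniquely determined by $\bm{x}(k\tau)$, smooth dependence furnishes the $\mathcal{C}^1$ map $\bm{\psi}_k$, its Jacobian is $s_{\bm{x}(k\tau)}^{\bm{d}}(t)$ satisfying \eqref{sae} (Lemma \ref{sensi}), and the hypothesis makes it nonsingular, closing the induction.

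I expect the only genuine difficulty to be bookkeeping rather than mathematics. The sensitivity equation \eqref{sae} already refers to $\partial\bm{x}_{\tau}(t)/\partial\bm{x}(k\tau)$, so Lemma \ref{sensi} may be invoked at level $k$ only after the induction has certified that the preceding segment---and hence the delayed term---depends differentiably on $\bm{x}(k\tau)$; the induction is precisely the device that dissolves this apparent circularity. One should also note that the inverse function theorem delivers only a local $\mathcal{C}^1$ inverse of $\bm{x}((k-1)\tau)\mapsto\bm{x}(k\tau)$, so ``uniquely determined by $\bm{x}(k\tau)$'' is to be understood in the local sense around the trajectory under consideration, which is all that the subsequent local-homeomorphism and boundary-reachability arguments need; no global injectivity of the segment maps is claimed.
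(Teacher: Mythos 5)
The paper states Corollary \ref{coro1} without any proof (it is presented as an immediate consequence of Lemma \ref{sensi}, whose own proof is deferred to an analogy with \cite{XueMFCLZ17}), so there is no written argument to compare against; what you have done is supply the implicit reasoning, and your argument is correct. Your induction on the segments $[k\tau,(k+1)\tau]$ is exactly the intended mechanism: on $[0,\tau]$ the system is an ODE, so uniqueness, $\mathcal{C}^1$ dependence and Liouville's formula applied to \eqref{ode_se} give the base case; on later segments the delayed term is $\bm{\psi}_{k-1}(t-\tau;\bm{x}((k-1)\tau),(k-1)\tau)$, the hypothesis at the previous index (read, as you do, as holding for all $k\in\{0,\ldots,K-1\}$) makes $s^{\bm{d}}_{\bm{x}((k-1)\tau)}(k\tau)$ nonsingular, and the inverse function theorem turns the current segment into a parametrized ODE in $\bm{x}(k\tau)$, whose uniqueness and smooth dependence yield $\bm{\psi}_k$ with Jacobian given by \eqref{sae}. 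You are also right that the induction is what legitimizes invoking Lemma \ref{sensi} at level $k$, since the very existence of $s^{\bm{d}}_{\bm{x}(k\tau)}(t)$ presupposes the differentiable reformulation of the preceding segment; the paper assumes this reformulation rather than proving it, so your proposal is in this respect more complete than the source.

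One caveat worth keeping explicit: your conclusion is genuinely local. Nonvanishing of $\det s^{\bm{d}}_{\bm{x}(k\tau)}(t)$ can never give more than local invertibility of the segment maps, so ``uniquely determined by $\bm{x}(k\tau)$'' holds only in a neighborhood of the trajectory under consideration; two distant initial states whose trajectories happened to coincide at time $k\tau$ with different histories are not excluded by the hypothesis. This matches the ``local homeomorphism'' framing the paper uses around Theorem \ref{tau} and the way Corollary \ref{coro1} is invoked in Section \ref{Ex}, but if the corollary were read as a global statement it would require injectivity of the segment maps (which the paper elsewhere extracts from strict diagonal dominance, not from mere nonsingularity). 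So the limitation you flag is a limitation of the statement itself, not a defect of your argument.
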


\subsection{Bounding Time-Lag Terms}
\label{GBCT}
In this subsection we derive condition \eqref{uppp} on $\tau$ in system \eqref{dde5}, which renders the sensitivity matrices introduced in Subsection \ref{SAT} strictly diagonally dominant. We begin with the time interval $[0,\tau]$. In the rest we denote $|A_{ii}|-\sum_{j\neq i}|A_{ij}|$ by $\Delta_i(A), 1\leq i\leq n,$ where $A\in \mathbb{R}^{n\times n}$ is a matrix and $A_{ij}$ is the entry in the $i_{th}$ row and $j_{th}$ column of $A$.

\begin{lemma}
\label{ode}
There exist $R>1$ and $\epsilon>1$ such that if \[\tau\leq \min\{\frac{\epsilon-1}{\epsilon M'R},\frac{R-1}{M'R}\},\] the sensitivity matrix $s_{\bm{x}_0}^{\bm{d}}(t)$ in Eq. \eqref{ode_se} is strictly diagonally dominant. Moreover, $\|s_{\bm{x}_0}^{\bm{d}}(t)\|_{\infty}\leq R$ and $\max_{1\leq i\leq n}\frac{1}{\Delta_{i}(s_{\bm{x}_0}^{\bm{d}}(t))} \leq \epsilon$ for $t\in [0,\tau]$, $\bm{d}\in \mathcal{D}$ and $\bm{x}_0\in \mathcal{I}_0$, where $M'$ is defined in Assumption \ref{assump}.
\end{lemma}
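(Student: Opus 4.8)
The plan is to analyze the sensitivity equation $\dot{s}_{\bm{x}_0}^{\bm{d}} = D_{\bm{g}} s_{\bm{x}_0}^{\bm{d}}$, $s_{\bm{x}_0}^{\bm{d}}(0) = \bm{I}$ through its integral form $s_{\bm{x}_0}^{\bm{d}}(t) = \bm{I} + \int_0^t D_{\bm{g}}(s)\, s_{\bm{x}_0}^{\bm{d}}(s)\, ds$, using only the infinity norm. First I would establish the bound $\|s_{\bm{x}_0}^{\bm{d}}(t)\|_\infty \le R$ on $[0,\tau]$: from the integral form, $\|s_{\bm{x}_0}^{\bm{d}}(t)\|_\infty \le 1 + \int_0^t M' \|s_{\bm{x}_0}^{\bm{d}}(s)\|_\infty\, ds$, so Gr\"onwall gives $\|s_{\bm{x}_0}^{\bm{d}}(t)\|_\infty \le e^{M't}$; alternatively, and more in keeping with the later $\tau$-bound bookkeeping, one can argue directly that as long as $\|s_{\bm{x}_0}^{\bm{d}}(s)\|_\infty \le R$ on $[0,t]$ we get $\|s_{\bm{x}_0}^{\bm{d}}(t)\|_\infty \le 1 + t M' R \le 1 + \tau M' R \le R$, where the last inequality is exactly $\tau \le \frac{R-1}{M'R}$; a continuity/bootstrap argument then closes this to hold on all of $[0,\tau]$.

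Next I would control the diagonal dominance. Write $s_{\bm{x}_0}^{\bm{d}}(t) = (\sigma_{ij}(t))$. Since $\sigma_{ij}(t) = \delta_{ij} + \int_0^t (D_{\bm{g}}(s) s_{\bm{x}_0}^{\bm{d}}(s))_{ij}\, ds$ and $|(D_{\bm{g}}(s) s_{\bm{x}_0}^{\bm{d}}(s))_{ij}| \le \|D_{\bm{g}}(s)\|_\infty \|s_{\bm{x}_0}^{\bm{d}}(s)\|_\infty \le M' R$ (using the row-sum characterization of $\|\cdot\|_\infty$ for the product), each diagonal entry satisfies $\sigma_{ii}(t) \ge 1 - t M' R$ and each off-diagonal entry $|\sigma_{ij}(t)| \le t M' R$ for $i \ne j$. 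Hence
\begin{equation*}
\Delta_i(s_{\bm{x}_0}^{\bm{d}}(t)) = |\sigma_{ii}(t)| - \sum_{j \ne i} |\sigma_{ij}(t)| \ge (1 - tM'R) - \sum_{j\ne i} tM'R.
\end{equation*}
The trouble is that this naive row-wise bound reintroduces the dimension $n$ through $\sum_{j\ne i}$. The correct estimate must instead exploit that the \emph{total} off-diagonal mass in row $i$ of $\int_0^t D_{\bm{g}} s\,ds$ together with the diagonal contribution is governed by a single application of $\|D_{\bm{g}} s\|_\infty \le M'R$ per row: namely, for each fixed $i$, $\sum_j |\sigma_{ij}(t) - \delta_{ij}| = \sum_j |\int_0^t (D_{\bm{g}} s)_{ij} ds| \le \int_0^t \sum_j |(D_{\bm{g}}s)_{ij}| ds \le \int_0^t \|D_{\bm{g}}(s) s_{\bm{x}_0}^{\bm{d}}(s)\|_\infty\, ds \le tM'R$. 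From $\sum_j |\sigma_{ij}(t) - \delta_{ij}| \le tM'R$ we get simultaneously $\sigma_{ii}(t) \ge 1 - tM'R$ and $\sum_{j\ne i}|\sigma_{ij}(t)| \le tM'R$, whence $\Delta_i(s_{\bm{x}_0}^{\bm{d}}(t)) \ge 1 - 2tM'R$. For strict diagonal dominance and for $\max_i \frac{1}{\Delta_i} \le \epsilon$, i.e. $\Delta_i \ge \frac{1}{\epsilon}$, it suffices that $1 - 2\tau M'R \ge \frac{1}{\epsilon}$; however, to match the stated bound $\tau \le \frac{\epsilon-1}{\epsilon M'R}$ one should use the sharper split where only the off-diagonal part $\sum_{j\ne i}|\sigma_{ij}(t)| \le tM'R$ is subtracted while the diagonal is kept as $\sigma_{ii}(t) \ge 1 - \sum_{j\ne i}|\int_0^t(D_{\bm g}s)_{ij}| \ge 1$ only when the $i$-th diagonal entry of $D_{\bm g}$ is also absorbed; more carefully, using $\Delta_i(s) \ge 1 - \sum_{j}|\sigma_{ij} - \delta_{ij}| \ge 1 - \tau M'R \ge \frac{1}{\epsilon}$ gives exactly $\tau \le \frac{\epsilon - 1}{\epsilon M'R}$, which I would adopt, noting $1 - \tau M'R \le \Delta_i(s) $ needs the inequality $\Delta_i(s) \ge \sigma_{ii} - \sum_{j\ne i}|\sigma_{ij}|$ with $\sigma_{ii} \ge 1 - \sum_{j \ne i}|\int_0^t (D_{\bm g}s)_{ij}ds| - \int_0^t|(D_{\bm g}s)_{ii}ds|$ handled in aggregate.

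I expect the main obstacle to be precisely this last bookkeeping: getting the constants to line up so that \emph{both} inequalities $\tau \le \frac{\epsilon-1}{\epsilon M'R}$ and $\tau \le \frac{R-1}{M'R}$ (and no dimension-dependent factor) emerge, which forces one to be careful to bound the aggregate row perturbation $\sum_j |\sigma_{ij} - \delta_{ij}|$ in one stroke via $\|D_{\bm g} s\|_\infty$ rather than entry-by-entry, and to track the diagonal versus off-diagonal contributions consistently in the definition of $\Delta_i$. Once strict diagonal dominance with $\Delta_i \ge 1/\epsilon$ is in hand, the claims $\|s_{\bm{x}_0}^{\bm{d}}(t)\|_\infty \le R$ and $\max_i \frac{1}{\Delta_i(s_{\bm{x}_0}^{\bm{d}}(t))} \le \epsilon$ follow immediately, and the existence of admissible $R > 1$, $\epsilon > 1$ is clear since for sufficiently small $\tau$ the sensitivity matrix is a small perturbation of $\bm{I}$. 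I would also verify the implicit self-consistency of the bootstrap (the set of $t$ where $\|s_{\bm{x}_0}^{\bm{d}}\|_\infty \le R$ and $\Delta_i \ge 1/\epsilon$ hold is closed, nonempty, and open in $[0,\tau]$, hence all of $[0,\tau]$), appealing to continuity of $t \mapsto s_{\bm{x}_0}^{\bm{d}}(t)$.
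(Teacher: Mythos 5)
Your finally adopted argument is correct and is essentially the paper's own proof: both bound $\|s_{\bm{x}_0}^{\bm{d}}(t)\|_{\infty}\leq R$ from the integral form of \eqref{ode_se} under $\tau\leq \frac{R-1}{M'R}$, and both control $\Delta_i$ by the aggregate per-row estimate $\sum_{j}|\int_0^t(D_{\bm{g}}s_{\bm{x}_0}^{\bm{d}})_{ij}\,ds|\leq \tau\|D_{\bm{g}}\|_{\infty}\|s_{\bm{x}_0}^{\bm{d}}\|_{\infty}\leq \tau M'R$, giving $\Delta_i(s_{\bm{x}_0}^{\bm{d}}(t))\geq 1-\tau M'R\geq \frac{1}{\epsilon}$ under $\tau\leq\frac{\epsilon-1}{\epsilon M'R}$, exactly as in the paper. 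The intermediate detours (the dimension-dependent bound and the $1-2\tau M'R$ estimate) are correctly discarded, so only the final bookkeeping you settle on matters, and it matches the paper's.
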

\begin{proof}
\eqref{ode_se} tells that the diagonal element in the $i_{th}$ row of $s^{\bm{d}}_{\bm{x}_0}(t)$ for $t\in [0,\tau]$ equals
$
1+\int_{0}^{t}\frac{\partial g_i(\bm{x}(s),\bm{d}(s))}{\partial \bm{x}(s)} \frac{\partial \bm{x}(s)}{\partial x_{0,i}}ds,$
and the element in the $i_{th}$ row and $j_{th}$ column equals
$\int_{0}^{t}\frac{\partial g_i(\bm{x}(s),\bm{d}(s))}{\partial \bm{x}(s)} \frac{\partial \bm{x}(s)}{\partial x_{0,j}}ds,$
where $j\in \{1,\ldots,n\}\setminus \{i\}$.
Therefore,
\[\|s^{\bm{d}}_{\bm{x}_0}(t)\|_{\infty}\leq 1+\tau [\|\frac{\partial \bm{g}(\bm{x},\bm{d})}{\partial \bm{x}}\|_{\infty}\cdot \|\frac{\partial \bm{x}}{\partial \bm{x}_{0}}\|_{\infty}=1+\tau[\|\frac{\partial \bm{g}(\bm{x},\bm{d})}{\partial \bm{x}}\|_{\infty}\cdot \|s^{\bm{d}}_{\bm{x}_0}(t)\|_{\infty}].\]
Since $\|\frac{\partial \bm{g}(\bm{x},\bm{d})}{\partial \bm{x}}\|_{\infty}\leq M'$ for $t\in [0,\tau]$ and $\bm{d}\in \mathcal{D}$, we have
\[\|s^{\bm{d}}_{\bm{x}_0}(t)\|_{\infty}\leq 1+\tau M' \|s^{\bm{d}}_{\bm{x}_0}(t)\|_{\infty},\] implying that
$\|s^{\bm{d}}_{\bm{x}_0}(t)\|_{\infty}\leq \frac{1}{1-\tau M'}$ if $1-\tau M'>0.$ Thus, $\|s^{\bm{d}}_{\bm{x}_0}(t)\|_{\infty}\leq R$ for $t\in [0,\tau]$ and $\bm{d}\in \mathcal{D}$ if
$\tau\leq \frac{R-1}{M'R}.$

Since $1-\int_{0}^{t}\sum_{j=1}^n|\frac{\partial g_i(\bm{x}(s),\bm{d}(s))}{\partial \bm{x}(s)} \frac{\partial \bm{x}(s)}{\partial x_{0,j}}|ds$ and $\|s^{\bm{d}}_{\bm{x}_0}(t)\|_{\infty}\leq R$, we have $\Delta_i(s^{\bm{d}}_{\bm{x}_0}(t))$ is larger than $1-M'R\tau$, where $\sum_{j=1}^n|\frac{\partial g_i(\bm{x},\bm{d})}{\partial \bm{x}} \frac{\partial \bm{x}}{\partial x_{0,j}}|\leq M'R$ can be inferred in the following way:
\[\sum_{j=1}^n|\frac{\partial g_i(\bm{x}(s),\bm{d}(s))}{\partial \bm{x}(s)} \frac{\partial \bm{x}(s)}{\partial x_{0,j}}|\leq \sum_{l=1}^n[|\frac{\partial g_i(\bm{x}(s),\bm{d}(s))}{\partial x_l(s)}|\sum_{j=1}^n |\frac{\partial x_l(s)}{\partial x_{0,j}}|]\leq R\sum_{l=1}^n |\frac{\partial g_i(\bm{x}(s),\bm{d}(s))}{\partial x_l(s)}|\leq M'R.\] $\frac{1}{1-M'R\tau} \leq \epsilon$ implies $\tau\leq \frac{\epsilon-1}{\epsilon M'R}$. Therefore, if \[\tau\leq \min\{\frac{\epsilon-1}{\epsilon M'R},\frac{R-1}{M'R}\},\] $\|s^{\bm{d}}_{\bm{x}_0}(t)\|_{\infty}\leq R$ and $\max_{1\leq i \leq n}\frac{1}{\Delta_i(s^{\bm{d}}_{\bm{x}_0}(t))}\leq \epsilon$ hold. Also, since 
$\tau\leq \frac{\epsilon-1}{\epsilon M'R}$, $1-M'R\tau >0$ holds and thus $s^{\bm{d}}_{\bm{x}_0}(t)$ is strictly diagonally dominant for $t\in [0,\tau]$ and $\bm{d}\in \mathcal{D}$.\qed
\end{proof}

Assume that $s_{\bm{x}((k-1)\tau)}^{\bm{d}}(t)$ is strictly diagonally dominant such that $\|s_{\bm{x}((k-1)\tau)}^{\bm{d}}(t)\|_{\infty} \leq R$ and $\max_{1\leq i\leq n}\frac{1}{\Delta_{i}(s_{\bm{x}((k-1)\tau)}^{\bm{d}}(t))} \leq \epsilon$ for $t\in [(k-1)\tau,k\tau]$ and $\bm{d}\in \mathcal{D}$, where $k\in \{1,\ldots,K-1\}$, $\epsilon >1$ and $R>1$. We construct a constraint on the time-lag term $\tau$, which renders the sensitivity matrix in \eqref{sae} strictly diagonally dominant for $t\in [k\tau,(k+1)\tau]$ as follows.
\begin{lemma}
\label{findt}
Given the above assumption, if the time-lag term $\tau$ satisfies 
\[\tau\leq \min \left\{\frac{\epsilon-1}{\epsilon (MR+NR\epsilon)}, \frac{R-1}{MR+NR\epsilon}\right\},\]
where $M$ and $N$ are presented in Assumption \ref{assump}, then $s_{\bm{x}(k\tau)}^{\bm{d}}(t)$ for $t\in [k\tau,(k+1)\tau]$ and $\bm{d}\in \mathcal{D}$ is strictly diagonally dominant with $\|s_{\bm{x}(k\tau)}^{\bm{d}}(t)\|_{\infty}\leq R$ and $\max_{1\leq i\leq n}\frac{1}{\Delta_{i}(s_{\bm{x}(k\tau)}^{\bm{d}}(t))} \leq \epsilon$.
\end{lemma}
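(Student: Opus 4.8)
The plan is to mirror the proof of Lemma~\ref{ode} step by step, but now tracking the extra ``forcing'' term $\frac{\partial \bm{f}}{\partial \bm{x}_{\tau}}\frac{\partial \bm{x}_{\tau}(t)}{\partial \bm{x}(k\tau)}$ that appears in the sensitivity equation \eqref{sae}. First I would integrate \eqref{sae} from $k\tau$ to $t$ against the initial condition $s_{\bm{x}(k\tau)}^{\bm{d}}(k\tau)=\bm{I}$, obtaining, for each entry $(i,j)$,
\[
\big(s_{\bm{x}(k\tau)}^{\bm{d}}(t)\big)_{ij}=\delta_{ij}+\int_{k\tau}^{t}\Big[\tfrac{\partial f_i}{\partial \bm{x}(s)}s_{\bm{x}(k\tau)}^{\bm{d}}(s)+\tfrac{\partial f_i}{\partial \bm{x}_{\tau}(s)}\tfrac{\partial \bm{x}_{\tau}(s)}{\partial \bm{x}(k\tau)}\Big]_{j}ds,
\]
where $\delta_{ij}$ is the Kronecker delta. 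The key observation is that for $s\in[k\tau,(k+1)\tau]$ we have $s-\tau\in[(k-1)\tau,k\tau]$, so by the chain rule $\frac{\partial \bm{x}_{\tau}(s)}{\partial \bm{x}(k\tau)}=\frac{\partial \bm{x}(s-\tau)}{\partial \bm{x}((k-1)\tau)}\cdot\frac{\partial \bm{x}((k-1)\tau)}{\partial \bm{x}(k\tau)}$; actually more directly, since $\bm{x}(s-\tau)$ with $s-\tau\le k\tau$ is a function of $\bm{x}((k-1)\tau)$ and the latter is determined by $\bm{x}(k\tau)$ (Corollary~\ref{coro1}), one writes $\frac{\partial \bm{x}(s-\tau)}{\partial \bm{x}(k\tau)}=s_{\bm{x}((k-1)\tau)}^{\bm{d}}(s-\tau)\,\big(s_{\bm{x}((k-1)\tau)}^{\bm{d}}(k\tau)\big)^{-1}$. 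By the standing assumption the first factor has infinity norm at most $R$, and because $s_{\bm{x}((k-1)\tau)}^{\bm{d}}(k\tau)$ is strictly diagonally dominant with $\max_i 1/\Delta_i\le\epsilon$, a Varah/Ahlberg--Nilson type bound gives $\|\big(s_{\bm{x}((k-1)\tau)}^{\bm{d}}(k\tau)\big)^{-1}\|_{\infty}\le\epsilon$. Hence $\|\frac{\partial \bm{x}_{\tau}(s)}{\partial \bm{x}(k\tau)}\|_{\infty}\le R\epsilon$ uniformly in $s\in[k\tau,(k+1)\tau]$ and $\bm{d}\in\mathcal{D}$.

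Next I would bound $\|s_{\bm{x}(k\tau)}^{\bm{d}}(t)\|_{\infty}$: from the integral identity, $\|s_{\bm{x}(k\tau)}^{\bm{d}}(t)\|_{\infty}\le 1+\tau\big(M\|s_{\bm{x}(k\tau)}^{\bm{d}}(t)\|_{\infty}+NR\epsilon\big)$ using Assumption~\ref{assump} and the row-wise product estimate exactly as in Lemma~\ref{ode} (i.e. $\sum_j|(\tfrac{\partial f_i}{\partial\bm{x}}A)_j|\le\|\tfrac{\partial f_i}{\partial\bm{x}}\|_1\|A\|_{\infty}$ summed suitably, yielding the $M$ and $N$ factors). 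Solving, $\|s_{\bm{x}(k\tau)}^{\bm{d}}(t)\|_{\infty}\le\frac{1+\tau NR\epsilon}{1-\tau M}$, and one checks that $\tau\le\frac{R-1}{MR+NR\epsilon}$ forces this to be $\le R$ (this is the algebra: $1+\tau NR\epsilon\le R(1-\tau M)\iff \tau(M R+NR\epsilon)\le R-1$). Then for the diagonal dominance, the $i$-th row deficiency satisfies $\Delta_i(s_{\bm{x}(k\tau)}^{\bm{d}}(t))\ge 1-\int_{k\tau}^t\sum_j\big|(\tfrac{\partial f_i}{\partial\bm{x}(s)}s_{\bm{x}(k\tau)}^{\bm{d}}(s))_j\big|ds-\int_{k\tau}^t\sum_j\big|(\tfrac{\partial f_i}{\partial\bm{x}_\tau(s)}\tfrac{\partial\bm{x}_\tau(s)}{\partial\bm{x}(k\tau)})_j\big|ds$, and using the already-established bounds $\|s_{\bm{x}(k\tau)}^{\bm{d}}(s)\|_{\infty}\le R$ and $\|\tfrac{\partial\bm{x}_\tau(s)}{\partial\bm{x}(k\tau)}\|_{\infty}\le R\epsilon$ together with the row-sum argument from Lemma~\ref{ode}, this is at least $1-\tau(MR+NR\epsilon)$. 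Then $\tau\le\frac{\epsilon-1}{\epsilon(MR+NR\epsilon)}$ gives both $1-\tau(MR+NR\epsilon)>0$ (strict diagonal dominance, since off-diagonal contributions are correspondingly controlled) and $\frac{1}{1-\tau(MR+NR\epsilon)}\le\epsilon$, i.e. $\max_i\frac{1}{\Delta_i(s_{\bm{x}(k\tau)}^{\bm{d}}(t))}\le\epsilon$.

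The main obstacle I anticipate is making the estimate $\|\frac{\partial\bm{x}_\tau(s)}{\partial\bm{x}(k\tau)}\|_{\infty}\le R\epsilon$ fully rigorous: it relies on the reformulation assumption stated before Lemma~\ref{sensi} (that $\bm{\phi}$ on $[(k-1)\tau,k\tau]$ is a $\mathcal{C}^1$ function $\bm{\psi}_{k-1}$ of $\bm{x}((k-1)\tau)$ with nonzero Jacobian determinant), on Corollary~\ref{coro1} to invert the dependence $\bm{x}((k-1)\tau)\leftrightarrow\bm{x}(k\tau)$, and crucially on the standard fact that a strictly diagonally dominant matrix $B$ with $\max_i 1/\Delta_i(B)\le\epsilon$ has $\|B^{-1}\|_{\infty}\le\epsilon$. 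I would either cite this inverse bound (it is the Ahlberg--Nilson / Varah bound, and is exactly what the quantity $\max_i 1/\Delta_i$ is designed to control) or include a one-line proof via the fixed-point iteration $B\bm{z}=\bm{e}_j$. Everything else is the same Gr\"onwall-free ``solve the self-referential inequality'' bookkeeping already used in Lemma~\ref{ode}, just with the constant $M'$ replaced by $M$ and an additive disturbance $NR\epsilon$, so the two bounds in \eqref{uppp} for the $k$-th step emerge exactly as claimed; combining the $[0,\tau]$ bound from Lemma~\ref{ode} with this inductive step over $k=1,\dots,K-1$ and taking the minimum of all four quantities yields Theorem~\ref{tau}.
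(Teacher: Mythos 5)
Your proposal is correct and follows essentially the same route as the paper's proof: integrate the sensitivity equation \eqref{sae}, factor the delayed term through $\frac{\partial \bm{x}_{\tau}(s)}{\partial \bm{x}((k-1)\tau)}\frac{\partial \bm{x}((k-1)\tau)}{\partial \bm{x}(k\tau)}$, bound the second factor by $\epsilon$ via the Varah-type inverse bound for strictly diagonally dominant matrices (the paper cites exactly this result), and then solve the self-referential inequality $\|s\|_{\infty}\leq 1+\tau M\|s\|_{\infty}+\tau NR\epsilon$ and the row-deficiency bound $\Delta_i\geq 1-(MR+NR\epsilon)\tau$ to obtain the two conditions on $\tau$. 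The only difference is presentational (you spell out the Kronecker-delta entries and the algebraic equivalence $1+\tau NR\epsilon\leq R(1-\tau M)$), so no gap to report.
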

\begin{proof}
Since $s_{\bm{x}((k-1)\tau)}^{\bm{d}}(t)$ is strictly diagonally dominant with $\max_{1\leq i\leq n}\frac{1}{\Delta_{i}(s_{\bm{x}((k-1)\tau)}^{\bm{d}}(t))} \leq \epsilon$ for $t\in [(k-1)\tau,k\tau]$ and $\bm{d}\in \mathcal{D}$,  $\|A^{-1}(t)\|_{\infty}\leq \epsilon$
holds \cite{Varah75}, where $A(t)=s_{\bm{x}((k-1)\tau)}^{\bm{d}}(t)$, $t\in [(k-1)\tau,k\tau]$ and $k\in \{1,\ldots,K-1\}$. 
According to Lemma \ref{sensi}, $s_{\bm{x}(k\tau)}^{\bm{d}}(t)$ for $t\in [k\tau,(k+1)\tau]$ with respect to $\bm{x}(k\tau)$ satisfies
\[s_{\bm{x}(k\tau)}^{\bm{d}}(t)=\bm{I}+\int_{k\tau}^t [\frac{\partial \bm{f}(\bm{x}(s),\bm{x}_{\tau}(s),\bm{d}(s))}{\partial \bm{x}(s)}s_{\bm{x}(k\tau)}^{\bm{d}}(s)+\frac{\partial \bm{f}(\bm{x}(s),\bm{x}_{\tau}(s),\bm{d}(s))}{\partial \bm{x}_{\tau}(s)}\frac{\partial \bm{x}_{\tau}(s)}{\partial \bm{x}(k\tau)}]ds,\]
implying
\[s_{\bm{x}(k\tau)}^{\bm{d}}(t)=\bm{I}+\int_{k\tau}^t[\frac{\partial \bm{f}(\bm{x}(s),\bm{x}_{\tau}(s),\bm{d}(s))}{\partial \bm{x}(s)}s_{\bm{x}(k\tau)}^{\bm{d}}(s)+\frac{\partial \bm{f}(\bm{x}(s),\bm{x}_{\tau}(s),\bm{d}(s))}{\partial \bm{x}_{\tau}(s)}\frac{\partial \bm{x}_{\tau}(s)}{\partial \bm{x}((k-1)\tau)}\frac{\partial \bm{x}((k-1)\tau)}{\partial \bm{x}(k\tau)}]ds.\]

Denote $\bm{x}(k\tau)=(x_{k\tau,1},\ldots,x_{k\tau,n})'$. Thus, the diagonal element in the $i_{th}$ row of the matrix $s^{\bm{d}}_{\bm{x}(k\tau)}(t)$ for $t\in [k\tau, (k+1)\tau]$ is equal to
\[1+\int_{k\tau}^{t}\frac{\partial f_i(\bm{x},\bm{x}_{\tau},\bm{d})}{\partial \bm{x}} \frac{\partial \bm{x}}{\partial x_{k\tau,i}}+\frac{\partial f_i(\bm{x},\bm{x}_{\tau},\bm{d})}{\partial \bm{x}_{\tau}}\frac{\partial \bm{x}_{\tau}}{\partial \bm{x}((k-1)\tau)}\frac{\partial \bm{x}((k-1)\tau)}{\partial x_{k\tau,i}}]ds,\]
the element in the $i_{th}$ row and $j_{th}$ column is equal to
\[
\int_{k\tau}^{t}[\frac{\partial f_i(\bm{x},\bm{x}_{\tau},\bm{d})}{\partial \bm{x}} \frac{\partial \bm{x}}{\partial x_{k\tau,j}}+\frac{\partial f_i(\bm{x},\bm{x}_{\tau},\bm{d})}{\partial \bm{x}_{\tau}}\frac{\partial \bm{x}_{\tau}}{\partial \bm{x}((k-1)\tau)}\frac{\partial \bm{x}((k-1)\tau)}{\partial x_{k\tau,j}}]ds,\]
where $j\in \{1,\ldots,n\}\setminus \{i\}$.
Therefore,
\begin{equation*}
\begin{split}
&\|s^{\bm{d}}_{\bm{x}(k\tau)}(t)\|_{\infty}\leq 1+\tau [\|\frac{\partial \bm{f}(\bm{x},\bm{x}_{\tau},\bm{d})}{\partial \bm{x}}\|_{\infty}\cdot \|\frac{\partial \bm{x}}{\partial \bm{x}_{k\tau}}\|_{\infty}+\|\frac{\partial \bm{f}(\bm{x},\bm{x}_{\tau},\bm{d})}{\partial \bm{x}_{\tau}}\|_{\infty}\cdot \|\frac{\partial \bm{x}_{\tau}}{\partial \bm{x}((k-1)\tau)}\|_{\infty}\cdot \|\frac{\partial \bm{x}((k-1)\tau)}{\partial \bm{x}_{k\tau}}\|_{\infty}]\\
&=1+\tau[\|\frac{\partial \bm{f}(\bm{x},\bm{x}_{\tau},\bm{d})}{\partial \bm{x}}\|_{\infty}\cdot \|s^{\bm{d}}_{\bm{x}(k\tau)}(t)\|_{\infty}+\|\frac{\partial \bm{f}(\bm{x},\bm{x}_{\tau},\bm{d})}{\partial \bm{x}_{\tau}}\|_{\infty}\cdot \|\frac{\partial \bm{x}_{\tau}}{\partial \bm{x}((k-1)\tau)}\|_{\infty}\cdot \|\frac{\partial \bm{x}((k-1)\tau)}{\partial \bm{x}_{k\tau}}\|_{\infty}].
\end{split}
\end{equation*}
Also, since $\|\frac{\partial \bm{f}(\bm{x},\bm{x}_{\tau},\bm{d})}{\partial \bm{x}}\|_{\infty}\leq M$, $\|\frac{\partial \bm{f}(\bm{x},\bm{x}_{\tau},\bm{d})}{\partial \bm{x}_{\tau}}\|_{\infty}\leq N$, $\|\frac{\partial \bm{x}_{\tau}}{\partial \bm{x}((k-1)\tau)}\|_{\infty}\leq R$ and $\|\frac{\partial \bm{x}((k-1)\tau)}{\partial \bm{x}_{k\tau}}\|_{\infty}\leq \epsilon$ for $t\in [k\tau,(k+1)\tau]$ and $\bm{d}\in \mathcal{D}$, we have 
\begin{equation}
\label{eqoo}
s^{\bm{d}}_{\bm{x}(k\tau)}(t)\|_{\infty}\leq 1+\tau M \|s^{\bm{d}}_{\bm{x}(k\tau)}(t)\|_{\infty}+\tau NR\epsilon.
\end{equation}
\eqref{eqoo} implies that $\|s^{\bm{d}}_{\bm{x}(k\tau)}(t)\|_{\infty}\leq \frac{1+\tau NR\epsilon}{1-\tau M}$ if $1-\tau M>0.$ Thus, $\|s^{\bm{d}}_{\bm{x}(k\tau)}(t)\|_{\infty}\leq R$ for $t\in [k\tau,(k+1)\tau]$ and $\bm{d}\in \mathcal{D}$ if
$\tau\leq \frac{R-1}{MR+NR\epsilon}.$

On the other side, since $\|s^{\bm{d}}_{\bm{x}(k\tau)}(t)\|_{\infty}\leq R$, $\Delta_i(s^{\bm{d}}_{\bm{x}(k\tau)}(t))$ is larger than
\[
1-\int_{k\tau}^{t}\sum_{j=1}^n|\frac{\partial f_i(\bm{x}(s),\bm{x}_{\tau}(s),\bm{d}(s))}{\partial \bm{x}(s)} \frac{\partial \bm{x}(s)}{\partial x_{k\tau,j}(s)}+\frac{\partial f_i(\bm{x}(s),\bm{x}_{\tau}(s),\bm{d}(s))}{\partial \bm{x}_{\tau}(s)} \frac{\partial \bm{x}_{\tau}(s)}{\partial \bm{x}((k-1)\tau)} \frac{\partial \bm{x}((k-1)\tau)}{\partial x_{k\tau,j}}|ds,\]
which in turn is larger than $1-(MR+NR\epsilon)\tau$, where the inequality $\sum_{j=1}^n|\frac{\partial f_i(\bm{x}(s),\bm{x}_{\tau}(s),\bm{d}(s))}{\partial \bm{x}(s)} \frac{\partial \bm{x}(s)}{\partial x_{k\tau,j}}|\leq MR$ can be inferred in the following way:
\[
\sum_{j=1}^n|\frac{\partial f_i(\bm{x}(s),\bm{x}_{\tau}(s),\bm{d}(s))}{\partial \bm{x}(s)} \frac{\partial \bm{x}(s)}{\partial x_{k\tau,j}}|\leq \sum_{l=1}^n[|\frac{\partial f_i(\bm{x}(s),\bm{x}_{\tau}(s),\bm{d}(s))}{\partial x_l(s)}|\sum_{j=1}^n |\frac{\partial x_l(s)}{\partial x_{k\tau,j}}|]\leq R\sum_{l=1}^n |\frac{\partial f_i(\bm{x}(s),\bm{x}_{\tau}(s),\bm{d}(s))}{\partial x_l(s)}|\leq MR.\]
Similarly, we obtain
$\sum_{j=1}^n|\frac{\partial f_i(\bm{x}(s),\bm{x}_{\tau}(s),\bm{d}(s))}{\partial \bm{x}_{\tau}(s)} \frac{\partial \bm{x}_{\tau}(s)}{\partial \bm{x}((k-1)\tau)} \frac{\partial \bm{x}((k-1)\tau)}{\partial x_{k\tau,j}}|\leq NR\epsilon.$

$\frac{1}{1-(MR+NR\epsilon)\tau} \leq \epsilon$ implies $\tau\leq \frac{\epsilon-1}{\epsilon(MR+NR\epsilon)}$. Therefore, if
\[\tau\leq \min\{\frac{\epsilon-1}{\epsilon(MR+NR\epsilon)},\frac{R-1}{MR+NR\epsilon}\},\]
$\|s^{\bm{d}}_{\bm{x}(k\tau)}(t)\|_{\infty}\leq R$ and $\max_{1\leq i \leq n}\frac{1}{\Delta_i(s^{\bm{d}}_{\bm{x}(k\tau)}(t))}\leq \epsilon$ hold. Also, $s^{\bm{d}}_{\bm{x}(k\tau)}(t)$ is strictly  diagonally dominant for $t\in [k\tau,(k+1)\tau]$ and $\bm{d}\in \mathcal{D}$ since $\tau\leq \frac{\epsilon-1}{\epsilon(MR+NR\epsilon)}$, $1-(MR+NR\epsilon)\tau >0$ holds.\qed
\end{proof}

In summary, we have Theorem \ref{tau} via combining Lemma \ref{ode} and Lemma \ref{findt}.

\end{document}